\newtheorem{theorem}{Theorem}
\newtheorem{lemma}{Lemma}
\newtheorem{assum}{Assumption}
\newtheorem{problem}{Problem}
\newcommand{\proc}[1]{\textup{\textsf{#1}}}
\begin{document}
\fancypagestyle{plain}{\fancyhead[R]{IFT-UAM/CSIC-22-44}}
\renewcommand{\headrulewidth}{0pt}

\title{Gravitational wave matched filtering by quantum Monte Carlo integration and quantum amplitude amplification}
\renewcommand{\thefootnote}{\fnsymbol{footnote}}

\author{
Koichi Miyamoto$^{1}$\footnote{miyamoto.kouichi.qiqb@osaka-u.ac.jp}, 
Gonzalo Morr\'{a}s$^{2}$\footnote{gonzalo.morras@estudiante.uam.es},
Takahiro S. Yamamoto$^{3}$\footnote{yamamoto.takahiro@f.mbox.nagoya-u.ac.jp},
Sachiko Kuroyanagi$^{2,3}$\footnote{sachiko.kuroyanagi@csic.es} \ and
Savvas Nesseris$^{2}$\footnote{savvas.nesseris@csic.es}
\\

\footnotesize $^{1}$ Center for Quantum Information and Quantum Biology, Osaka University, Toyonaka, 560-8531, Japan \\
\footnotesize $^{2}$ Instituto de F\'isica Te\'orica UAM-CSIC, Universidad Auton\'oma de Madrid, Cantoblanco, 28049 Madrid, Spain\\
\footnotesize $^{3}$ Department of Physics and Astrophysics, Nagoya University, Nagoya, 464-8602, Japan\\	
}

\date{\today}
\maketitle

\renewcommand{\thefootnote}{\arabic{footnote}}

\begin{abstract}

The speedup of heavy numerical tasks by quantum computing is now actively investigated in various fields including data analysis in physics and astronomy.
In this paper, we propose a new quantum algorithm for matched filtering in gravitational wave (GW) data analysis based on the previous work by Gao et al., Phys. Rev. Research 4, 023006 (2022) [arXiv:2109.01535].
Our approach uses the quantum algorithm for Monte Carlo integration for the signal-to-noise ratio (SNR) calculation instead of the fast Fourier transform used in Gao et al. and searches signal templates with high SNR by quantum amplitude amplification.
In this way, we achieve an exponential reduction of the qubit number compared with Gao et al.'s algorithm, keeping a quadratic speedup over classical GW matched filtering with respect to the template number.

\end{abstract}

\section{Introduction}

Quantum computing \cite{nielsen2002} is a developing technology and is expected to speed up some classes of computation that are intractable in classical computing.
The recent rapid advance in quantum computer development has been stimulating research on applications of quantum algorithms to concrete problems in various fields (see Ref.~\cite{lin2022} for a recent review).

In this paper, we study an application of some quantum algorithms to a problem in gravitational wave (GW) experiments, which measures the space-time distortion caused by GWs with a laser interferometer.
A recent paper \cite{gao2021} has proposed the use of Grover’s search algorithm for matched filtering, which is a commonly used technique to search a signal buried in noisy data and is widely used in GW data analysis \cite{balasubramanian1996,owen1996,owen1999,allen2012}.
Given a target signal waveform, called a template, we take an inner product between a template and data to cancel out the noise contribution and extract a signal.
In fact, the first GW was detected by the LIGO detectors in 2015 \cite{abbott2016}, and, after that, the worldwide GW detector network has observed tens of GW events using matched filtering \cite{LIGOScientific:2018mvr,abbott2021,LIGOScientific:2021djp}.

A challenging point in GW matched filtering is the large number of templates.
The functional form of a GW signal is predicted by general relativity depending on the GW source such as compact binary
coalescence \cite{balasubramanian1996,owen1996,owen1999,allen2012}, but it has some parameters such as masses of the compact objects, spin parameters, and luminosity distance.
In order to get a high signal-to-noise ratio (SNR), we must perform matched filtering using a template with appropriate parameters.
Therefore, usually, we set sufficiently many points in the parameter space 
and run an exhaustive search.
That is, we repeat matched filtering using templates one by one to find those that yield SNR larger than some threshold.
This is an extremely time-consuming task and expected to be sped up by quantum computing.

Fortunately, there exists a quantum algorithm for searching, called Grover's algorithm \cite{grover1996}.
Given $N$ data, $x_1,...,x_N$, represented as bit strings and a condition $F$ as a function that maps a bit string to 0 or 1, Grover's algorithm can find ``marked data" $x$ such that $F(x)=1$ making $O(\sqrt{N/n})$ calls to $F$, where $n$ is the number of marked data.
Therefore, it is often said that Grover's algorithm provides a quadratic speedup over the classical exhaustive search, which has $O(N/n)$ query complexity.
As an application of this, Ref.~\cite{gao2021} presented a quantum algorithm for GW matched filtering.
With the SNR calculation implemented as a quantum circuit, the aforementioned algorithm can find a template with SNR higher than a threshold $\rho_{\rm th}$ with $\widetilde{O}\left(M/\sqrt{r(\rho_{\rm th})}\right)$ complexity\footnote{In the big-O notation, we use a symbol $\widetilde{O}(\cdot)$, which hides logarithmic factors in $O(\cdot)$.}, where $r(\rho_{\rm th})$ is the fraction of templates that yield SNR $\rho\ge\rho_{\rm th}$ and $M$ is the number of points in the time-series data of the detector output, or, equivalently, the number of frequency bins of Fourier transformed data.
This is in fact a quadratic speedup over the classical method, which has $O(M/r(\rho_{\rm th}))$ complexity, with respect to the template number.

However, this quantum algorithm has the following subtlety.
It uses Fast Fourier Transform (FFT) \cite{cooley1967} for the SNR calculation, which is also used in the usual classical way.
FFT simultaneously calculates SNR for $M$ possible values of a parameter called time of coalescence, with other parameter fixed, in $O(M\log M)$ time, whereas naively such a computation takes $O(M^2)$ time without FFT.
However, in order to store the intermediate and final calculation results, FFT requires $O(M)$ qubits, which is a somewhat large number since $M$ is typically of order $4096{\rm Hz} \times 256{\rm s} \sim O(10^6)$, where $4096$Hz is the sampling frequency and $256$s is the typical duration of the data segment~\cite{allen2012}.
This might cause an issue on feasibility, since fault-tolerant quantum computers will have a limitation on the number of qubits available even in the future.
It is expected that creating one logical qubit requires thousands or tens of thousands physical qubits for error correction \cite{fowler2012}, and therefore realizing a quantum computer with millions of qubits is very challenging.

In light of this, we propose an alternative quantum algorithm for GW matched filtering, in which the SNR calculation with FFT is replaced with the quantum algorithm for Monte Carlo integration (QMCI) \cite{montanaro2015}.
QMCI is a method to estimate an integral given in the finite sum approximation.
Thus, it can be applied to the calculation of SNR, which includes frequency band integration and is in practice calculated as a sum of contributions from many Fourier modes.
In this approach, the required qubit number scales on $M$ as $O({\rm poly}(\log M))$, which means an exponential reduction from the FFT approach.
Note that this is not just a straightforward application of another quantum algorithm to a part of an existing method, since the use of QMCI causes the following issue.
Unlike FFT, which calculates SNR deterministically, QMCI inevitably accompanies errors, and thus comparing the SNR calculated by QMCI with a single SNR threshold $\rho_{\rm th}$ leads to a false alarm that the detector output yields SNR larger than $\rho_{\rm th}$ for some templates despite there being no such event.
As a solution to this, we propose to set two thresholds $\rho_{\rm hard}$ and $\rho_{\rm soft}$ that have the following meanings: we should never miss events with SNR $\rho\ge\rho_{\rm hard}$, and we do not want to be falsely alarmed by events with $\rho<\rho_{\rm soft}$.
Then, with QMCI accuracy set according to the difference between $\rho_{\rm hard}$ and $\rho_{\rm soft}$, the proposed algorithm says ``there is a signal" for events with SNR $\rho\ge\rho_{\rm hard}$ with high probability, ``there is no signal" for events with SNR $\rho<\rho_{\rm soft}$ with certainty, and either of these messages for events with SNR $\rho\in[\rho_{\rm soft}, \rho_{\rm hard})$.
The query complexity in this algorithm is of order $\widetilde{O}\left(M/\sqrt{r(\rho_{\rm hard})}\right)$, which still indicates a quadratic speedup.

The remaining part of this paper is organized as follows.
Section \ref{sec:preliminary} introduces the preliminary knowledge.
We outline GW matched filtering and some building-block quantum algorithms such as Grover's algorithm, quantum amplitude amplification (QAA), quantum amplitude estimation (QAE), and QMCI.
Section \ref{sec:main} is the main part.
Defining GW matched filtering as a mathematical problem, we explain the existing algorithm in \cite{gao2021}, and present our modified algorithm in detail, along with estimation of the query complexity and the qubit number and a plausible setting on thresholds $\rho_{\rm hard}$ and $\rho_{\rm soft}$.
Section \ref{sec:summary} summarizes this paper.
Some proofs are presented in appendices.

\section{Preliminary \label{sec:preliminary}}

\subsection{Notation}

Here, we summarize some notations used in this paper.
$\mathbb{R}_+$ denotes the set of all positive real numbers. 
For $n\in\mathbb{N}$, we define $[n]:=\{1,...,n\}$ and $[n]_0:=\{0,1,...,n-1\}$.
For any $x\in\mathbb{R}$, if $|x-y|\le\epsilon$ holds for some $y\in\mathbb{R}$ and $\epsilon\in\mathbb{R}_+$, we say that $x$ is $\epsilon$-close to $y$ and that $x$ is an $\epsilon$-approximation of $y$.
For any equation or inequality $C$, $\mathbbm{1}_C$ takes 1 if $C$ is satisfied, and 0 otherwise.
For $c\in\mathbb{C}$, $c^*$ denotes its complex conjugate.
For $\mathcal{X}=\{x_1,...,x_n\}$ and $\mathcal{Y}=\{y_1,...,y_n\}$, finite sets of real numbers  with same size $n$, we define the sample mean ${\rm Mean}(\mathcal{X}):=\frac{1}{n}\sum_{i=1}^nx_i$, the sample variance ${\rm Var}(\mathcal{X}):=\frac{1}{n}\sum_{i=1}^n\left(x_i-{\rm Mean}(\mathcal{X})\right)^2$ and the sample covariance ${\rm Cov}(\mathcal{X},\mathcal{Y}):=\frac{1}{n}\sum_{i=1}^n\left(x_i-{\rm Mean}(\mathcal{X})\right)\left(y_i-{\rm Mean}(\mathcal{Y})\right)$.
For $n\in\mathbb{N}$, $\mathbb{I}_n$ denotes the $n\times n$ identity matrix.
For $z\in\mathbb{C}$, $\Re z$ and $\Im z$ are the real and imaginary parts of $z$.

\subsection{Gravitational wave matched filtering}

Here, we outline matched filtering in GW search experiments \cite{balasubramanian1996,owen1996,owen1999,allen2012}.
Suppose that we are given the detector output $s(t)$ as a function of time $t$, which is a sum of the signal $h(t)$ and the noise $n(t)$:
\begin{equation}
    s(t)=h(t)+n(t).
\end{equation}
We assume that the noise is Gaussian, which means that, for each $f\in\mathbb{R}_+$, $\Re\tilde{n}(f)$ and $\Im\tilde{n}(f)$ are normal random variables and
\begin{equation}
    \mathbb{E}_{\rm n}[\tilde{n}(f)\tilde{n}^*(f^\prime)]=\frac{1}{2}S_{\rm n}(|f|)\delta(f-f^\prime) \label{eq:noisePScont}
\end{equation}
holds with the single-sided power spectrum density (PSD) $S_{\rm n}$.
Here, for any function $q(t)$ in time domain, $\tilde{q}(f):=\int^{\infty}_{-\infty}dt e^{2\pi i ft}q(t)$ is its Fourier transform, $\mathbb{E}_{\rm n}[\cdot]$ denotes an expectation with respect to randomness of the noise and $\delta(\cdot)$ is the Dirac delta function.
We define the inner product of two functions $q(t)$ and $q^\prime(t)$ in time domain as
\begin{equation}
    (q|q^\prime):=4\Re\left(\int^\infty_0df \frac{\tilde{q}^*(f)\tilde{q}^\prime(f)}{S_{\rm n}(f)}\right).
\end{equation}
The matched filtering search is peformed by taking an inner product of $s(t)$ and an appropriate filter function $Q(t)$ that yields a large inner product with the targeted signal.
The function $Q(t)$ is often called a template and normalized as $(Q|Q)=1$.
The template bank, the collection of templates, is prepared based on theoretically predicted waveform of signals.
The SNR for a detector output $s$ and template $Q$ is then defined as
\begin{equation}
    \rho = \frac{(Q|s)}{\sqrt{\mathbb{E}_{\rm n}[|(Q|n)|^2]}}=4\Re\left(\int^\infty_0df \frac{\tilde{Q}^*(f)\tilde{s}(f)}{S_{\rm n}(f)}\right).
\end{equation}
By setting a SNR threshold $\rho_{\rm th}$, exhaustive search is performed to find a template that gives a SNR larger than $\rho_{\rm th}$ to claim a detection.
In reality, non stationary detector noise known as glitches can generate large values of $\rho$ and cause false detections. To mitigate the effect of glitches, what is done in practice is to modify the variable used to rank the events. Instead of the SNR, real GW searches use more complicated ranking statistics which include signal consistency tests such as $\chi^2$ \cite{Allen:2004gu}. In this paper, we will only consider the SNR as the ranking statistic for simplicity, but the ideas of our quantum algorithm could be applied to searches with more complicated ranking statistics.
Hereafter, we write each Fourier transformed template as $\tilde{Q}_m(f)e^{-2\pi i ft_0}$, where $t_0$ is 
the time of coalescence and the dependency on other parameters (intrinsic parameters) is put into $\tilde{Q}_m$.
Here, we assume that there are $N_{\rm temp}$ candidates of the intrinsic parameter set in the template bank and label the functions $\tilde{Q}_m$ by $m\in[N_{\rm temp}]_0$.

In reality, we have a detector output as a sequence of discrete points in time \cite{allen2012}.
Suppose that a sequence of $s(t)$ is given by $\{s(\tau_l)\}_{l=0,...,M-1}$ at $M$ time points $\tau_0=0,\tau_1=\Delta t,...,\tau_{M-1}=(M-1)\Delta t$ with interval $\Delta t$.
In such a situation, Fourier transforms are given in the discrete form: for each $k\in \left\{0,1,...,M-1\right\}$. 
Thus, we redefine $\tilde{s}$ as
\begin{equation}
    \tilde{s}(f_k):=\Delta t \sum_{l=0}^{M-1} s(\tau_l) e^{2\pi ikl/M},
\end{equation}
where $f_k=k/T$ and $T=M\Delta t$, and similar quantities $\tilde{h}(f_k)$ and $\tilde{n}(f_k)$ for $h(t)$ and $n(t)$.
$\Re\tilde{n}(f_k)$ and $\Im\tilde{n}(f_k)$ are still normal random variables but Eq.~(\ref{eq:noisePScont}) is now converted into \cite{talbot2020}
\begin{equation}
    \mathbb{E}_{\rm n}[\tilde{n}(f_k)\tilde{n}^*(f_l)]=\frac{1}{2}S_{\rm n}(f_k)T\delta_{k,l}, \label{eq:noisePSdisc}
\end{equation}
where $\delta_{k,l}$ is the Kronecker delta.
Then, the SNR now becomes
\begin{equation}
	\rho_{m,j}=\frac{4}{M\Delta t}\Re\left(\sum^{\frac{M}{2}-1}_{k=1} \frac{\tilde{Q}^*_m(f_k)\tilde{s}(f_k)}{S_{\rm n}(f_k)}e^{2\pi i jk/M}\right) \,,
	\label{eq:SNRdiscorg}
\end{equation}
for the $m$-th intrinsic parameter set and the time of coalescence given as $t_0=j\Delta t$ with $j\in[M]_0$ \footnote{The sum in Eq.~(\ref{eq:SNRdiscorg}) runs over $k\in\left[\frac{M}{2}-1\right]$ rather than $k\in[M]_0$, because we use the one-sided power spectral density as defined in Eq.~(\ref{eq:noisePScont}). See \cite{allen2012} for the detail.}.
Here and hereafter, we assume that $M$ is even.
In the usual way of classical computing, although it seemingly takes $O(M^2)$ computational time to compute Eq.~(\ref{eq:SNRdiscorg}) for all $j\in[M]_0$ with $m$ fixed, we can do this in $O(M\log M)$ time using FFT \cite{cooley1967}.
This means that, for a fixed intrinsic parameter set, we can quickly search the optimal $t_0$ in $\{j\Delta t\}_{j=0,...,M-1}$ and obtain the optimal SNR $\rho_m:=\max_{j\in[M]_0}\rho_{m,j}$, which is why $t_0$ is dealt with separately from other template parameters in the conventional GW data analysis.
On the other hand, for the other intrinsic parameters, we calculate $\rho_m$ for each $m\in[N_{\rm temp}]_0$ one by one until we get $\rho_m\ge\rho_{\rm th}$, which results in the number of floating-point operations of order
\begin{equation}
O\left(\frac{M\log M}{r(\rho_{\rm th})}\right) \label{eq:compClass1}
\end{equation}
for $r(\rho_{\rm th})>0$. Here
\begin{equation}
	r(\rho):=\frac{\left|\left\{m\in[N_{\rm temp}]_0 \ \middle| \ \rho_{m}\ge\rho \right\}\right|}{N_{\rm temp}} \label{eq:r}
\end{equation}
is the fraction of intrinsic parameter sets in the template banks that yields SNRs larger than $\rho$, with time of coalescence optimized.
When $r(\rho_{\rm th})=0$, we have to go through all the template and thus the number of floating-point operations is
\begin{equation}
O\left(N_{\rm temp}M\log M\right) \,. 
\label{eq:compClass2}
\end{equation}

Unlike the above, as we will see later, we do not use FFT in the proposed quantum method.
For later convenience, we rewrite Eq.~(\ref{eq:SNRdiscorg}) as
\begin{eqnarray}
	\rho_{m,j}&=&\frac{2}{M}\sum^{\frac{M}{2}-1}_{k=1} \tilde{\rho}_{m,j,k}, \nonumber\\
	\tilde{\rho}_{m,j,k}&:=&\Re \left(\frac{2\tilde{Q}^*_m(f_k)\tilde{s}(f_k)}{S_{\rm n}(f_k)\Delta t}\right)\cos\left(\frac{2\pi jk}{M}\right)-\Im \left(\frac{2\tilde{Q}^*_m(f_k)\tilde{s}(f_k)}{S_{\rm n}(f_k)\Delta t}\right)\sin\left(\frac{2\pi jk}{M}\right),
	\label{eq:SNRdisc}
\end{eqnarray}
and set $\tilde{\rho}_{m,j,0}=0$.

\subsection{Representation of real numbers on qubits and some basic quantum circuits}

In numerical calculations in this paper, we use bit strings on qubits as fixed-point binary representations of real numbers and, for $x\in\mathbb{R}$, we denote by $\ket{x}$ the computational basis state on a quantum register which corresponds to $x$.
Unless otherwise stated, we assume that each register has $N_{\rm dig}$ qubits, where $N_{\rm dig}$ is a sufficiently large positive integer set independently from parameters in problems under consideration, and neglect errors from finite-precision representation.

For computing with real numbers, we use the quantum circuits for four basic arithmetic operations: addition $O_{\rm add}\ket{x}\ket{y}\ket{0}=\ket{x}\ket{y}\ket{x+y}$, subtraction $O_{\rm sub}\ket{x}\ket{y}\ket{0}=\ket{x}\ket{y}\ket{x-y}$, multiplication $O_{\rm mul}\ket{x}\ket{y}\ket{0}=\ket{x}\ket{y}\ket{xy}$ and division $O_{\rm div}\ket{x}\ket{y}\ket{0}=\ket{x}\ket{y}\ket{x/y}$, where $x,y$ are any real numbers ($y\ne 0$ for division) and some ancillary registers may be undisplayed.
In fact, concrete circuit implementations for these operation have been presented \cite{vedral1996,beckman1996,draper2000,cuccaro2004,takahashi2005,draper2006,alvarez2008,takahashi2008,takahashi2010,khosropour2011,jamal2013,jayashree2016,dibbo2016,babu2017,munoz2018,thapliyal2019}.
These circuits enable the calculation of rational functions.
We also use quantum circuits for calculation of elementary functions $f$ such as exponential, trigonometric functions, and so on: $O_f\ket{x}\ket{0}=\ket{x}\ket{f(x)}$ for any $x\in\mathbb{R}$.
Such circuits can be implemented through, for example, piecewise polynomial approximation \cite{haner2018}.

In addition to these circuits for numerical calculation, we now list some oracles used in the proposed quantum algorithm.
A comparer $O_{\rm comp}$ acts as $O_{\rm comp}\ket{x}\ket{y}\ket{0}=\ket{x}\ket{y}\left(\mathbbm{1}_{x\ge y}\ket{1}+\mathbbm{1}_{x<y}\ket{0}\right)$ for any $x,y\in\mathbb{R}$.
This is actually equivalent to subtraction $x-y$, since, if we adopt 2's complement method to represent negative numbers, the most significant digit represents the sign of a number \cite{koren2001}.
A Y-rotation with controlled angle gate $O_{\rm CY}$ acts as $O_{\rm CY}\ket{\theta}\ket{\psi}=\ket{\theta}\otimes R_{\rm Y}(\theta)\ket{\psi}$, where $R_{\rm Y}(\theta):=\begin{pmatrix}\cos\frac{\theta}{2} & -\sin\frac{\theta}{2} \\ \sin\frac{\theta}{2} & \cos\frac{\theta}{2} \end{pmatrix}$, for any $\theta\in\mathbb{R}$ and any single-qubit state $\ket{\psi}$.
This is implemented as a sequence of fixed-angle controlled Y-rotation gates $\ket{0}\bra{0}\otimes \mathbb{I}_2+\ket{1}\bra{1}\otimes R_{\rm Y}(\theta)$\cite{egger2020}, where $\theta\in\mathbb{R}$ is prefixed.
We also use the oracle $O^{\rm EqPr}_N$, where $N\in\mathbb{N}$, to generate equiprobable superposition of states from $\ket{0}$ to $\ket{N-1}$: $O^{\rm EqPr}_N\ket{0}=\frac{1}{\sqrt{N}}\sum_{i=0}^{N-1} \ket{i}$.
If $N=2^n$ with some $n\in\mathbb{N}$, we can generate such a state by operating a Hadamard gate on each qubit in a $n$-qubit register.
Also for $N$ that is not a power of 2, we can implement $O^{\rm EqPr}_N$ by the method in \cite{grover2002} to generate a state in which a given probability density $p(x)$ is amplitude-encoded, with $p(x)$ set to the uniform density on $[0, (N-1)/2^n]$ with $n=\left\lceil\log_2 N \right\rceil$.

The last oracle we mention here is $O^{\rm med}_N$ that, for any $N$ real numbers $x_1,...,x_N$, outputs the median ${\rm med}(x_1,...,x_N)$: $O^{\rm med}_N\ket{x_1}\cdots\ket{x_N}\ket{0}=\ket{x_1}\cdots\ket{x_N}\ket{{\rm med}(x_1,...,x_N)}$. 
This operation is implemented as follows.
First, we transform $\ket{x_1}\cdots\ket{x_N}$ to $\ket{x^{\rm sort}_1}\cdots\ket{x^{\rm sort}_N}$, where $x^{\rm sort}_1,...,x^{\rm sort}_N$ is a sequence made by ascending sort of $x_1,...,x_N$ and some ancillary qubits are not displayed\footnote{In this ascending sort operation, unitarity holds in the system including ancillary qubits.}.
Then, we let the number on the midmost register be ${\rm med}(x_1,...,x_N)$.
Note that exchange based sort algorithms such as bubble sort \cite{numericalrecipe} can be implemented since the operation
\begin{equation}
    \ket{x}\ket{y}\ket{0}\rightarrow \ket{x}\ket{y}\ket{\mathbbm{1}_{x\ge y}} \rightarrow \begin{cases}
    \ket{y}\ket{x}\ket{1} & ; \ {\rm if} \  x\ge y \\
    \ket{x}\ket{y}\ket{0} & ; \ {\rm otherwise}
    \end{cases},
\end{equation}
is possible for any $x,y\in\mathbb{R}$, where we use $O_{\rm comp}$ and controlled SWAP gates at the first and second arrows, respectively.

Hereafter, we collectively call the above oracles arithmetic oracles.

\subsection{Grover's algorithm and Quantum amplitude amplification}

Grover's algorithm \cite{grover1996} is a quantum algorithm for searching a ``marked entry" $x$, which satisfies some condition given as a binary-valued function, from an unstructured database.
Formally, we have the following theorem.

\begin{theorem}
Let $n\in\mathbb{N}$ and $F:\{0,1\}^{n}\rightarrow\{0,1\}$ is a function such that $F(x_{\rm tar})=1$ for one element $x_{\rm tar}\in\{0,1\}^{n}$ and $F(x)=0$ for any $x\in\{0,1\}^{n}\setminus\{x_{\rm tar}\}$.
Suppose that we are given an access to an oracle $O_F$ on a system consisting of a $n$-qubit register and a qubit such that $O_F\ket{x}\ket{0}=O_F\ket{x}\ket{F(x)}$ for any $x\in\{0,1\}^{n}$.
Then, for any $\delta\in(0,1)$, there exists a quantum algorithm that, with probability at least $1-\delta$, outputs $x_{\rm tar}$ making $O\left(\sqrt{N}\log\delta^{-1}\right)$ calls to $O_F$, where $N:=2^n$.
\label{th:Grover}
\end{theorem}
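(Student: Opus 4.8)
The plan is to run the textbook Grover iteration, track the state in its two-dimensional invariant subspace, fix the number of iterations by an explicit rounding, and then boost the success probability to $1-\delta$ by repetition with verification.

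First I would turn the bit oracle into a phase oracle. Adjoining one ancilla qubit prepared in $\ket{-}:=(\ket{0}-\ket{1})/\sqrt{2}$ and applying $O_F$ with that qubit as its output target yields, by phase kickback, the reflection $S_F\ket{x}=(-1)^{F(x)}\ket{x}=\ket{x}-2\,\mathbbm{1}_{x=x_{\rm tar}}\ket{x}$ on the $n$-qubit register, at a cost of one call to $O_F$ (the ancilla being returned to $\ket{-}$). Next I would set $\ket{\psi_0}:=\frac{1}{\sqrt{N}}\sum_{x\in\{0,1\}^n}\ket{x}$, which is prepared by a Hadamard on each of the $n$ qubits and uses no query, and introduce the reflection about it, $S_0:=2\ket{\psi_0}\bra{\psi_0}-\mathbb{I}_N$, built from $O^{\rm EqPr}_N$ (here just a layer of Hadamards), a multi-controlled phase, and the inverse of $O^{\rm EqPr}_N$. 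The Grover iterate is $G:=S_0 S_F$.

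Second I would invoke the standard two-dimensional reduction. Writing $\ket{\psi_0}=\sin\theta\,\ket{x_{\rm tar}}+\cos\theta\,\ket{u}$ with $\ket{u}$ the normalized uniform superposition over $\{0,1\}^n\setminus\{x_{\rm tar}\}$ and $\sin\theta=1/\sqrt{N}$, $\theta\in(0,\pi/2]$, one checks that $G$ restricted to $\mathrm{span}\{\ket{x_{\rm tar}},\ket{u}\}$ is a rotation by angle $2\theta$, so $G^m\ket{\psi_0}=\sin((2m+1)\theta)\,\ket{x_{\rm tar}}+\cos((2m+1)\theta)\,\ket{u}$ for every nonnegative integer $m$; this is the routine computation I would not expand. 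Choosing $m:=\lfloor\pi/(4\theta)\rfloor$, which is a positive integer because $\theta\le\pi/4$ when $N\ge2$, gives $|(2m+1)\theta-\pi/2|<\theta$, so measuring $G^m\ket{\psi_0}$ in the computational basis returns $x_{\rm tar}$ with probability $\sin^2((2m+1)\theta)\ge\cos^2\theta=1-1/N\ge\tfrac{1}{2}$. Since $\theta\ge\sin\theta=1/\sqrt{N}$ gives $m=O(\sqrt{N})$, a single run of $G^m$ followed by a measurement makes $O(\sqrt{N})$ calls to $O_F$ and succeeds with probability at least $1/2$ (the case $N=1$ being trivial, as then $\ket{\psi_0}=\ket{x_{\rm tar}}$).

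Finally, to reach confidence $1-\delta$, I would repeat the above run $R:=\lceil\log_2\delta^{-1}\rceil$ times; after each run I verify the measured string $x$ with one more query by computing $F(x)$, keeping $x$ only if $F(x)=1$, and I output the first verified string, which must equal $x_{\rm tar}$ since it is the unique marked element (if no run verifies, output an arbitrary string). The probability that none of the $R$ independent runs produces $x_{\rm tar}$ is at most $2^{-R}\le\delta$, and the total number of calls to $O_F$ is $R\cdot O(\sqrt{N})=O\!\left(\sqrt{N}\log\delta^{-1}\right)$, which is the claim. I expect the only delicate point to be the bookkeeping around the iteration count: showing that $m=\lfloor\pi/(4\theta)\rfloor$ is well defined and positive for all $N\ge2$ and that it yields the $N$-independent per-run lower bound $1/2$, which is exactly what lets the repetition step work without advance knowledge of $N$.
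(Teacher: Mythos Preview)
Your argument is correct: the phase-kickback construction of $S_F$, the two-dimensional rotation analysis with $\sin\theta=1/\sqrt{N}$, the choice $m=\lfloor\pi/(4\theta)\rfloor$ giving per-run success probability at least $1-1/N\ge 1/2$ for $N\ge2$, and the $\lceil\log_2\delta^{-1}\rceil$-fold repetition with verification are all sound and together yield the stated $O(\sqrt{N}\log\delta^{-1})$ query bound.

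The paper, however, does not supply a standalone proof of this theorem. It states the result as Grover's algorithm and then, after proving the more general quantum amplitude amplification statement (Theorem~\ref{th:QAA} and Algorithm~\ref{alg:QAA}), remarks that the search problem reduces to QAA by preparing $\frac{1}{\sqrt{N}}\sum_x\ket{x}\ket{F(x)}$ and running $\proc{QAA}$ with $a=1/N$. So your approach is genuinely different in organization: you give the classical direct Grover analysis with a fixed iteration count, while the paper derives the result as a corollary of its QAA machinery, which already contains the randomized-exponent search and the repetition-to-confidence-$1-\delta$ loop. Your route is more elementary and self-contained for this specific single-target case; the paper's route avoids repeating the rotation analysis by absorbing it into the QAA proof it needs anyway for the main algorithm.
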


This is often called a quadratic speedup over classical search methods that takes $O(N)$ time for the same problem.

Besides, there exists a quantum algorithm called QAA \cite{brassard1998,brassard2002}, which can be seen as an extension of Grover's algorithm.
It is an algorithm to amplify the amplitude of the ``marked state" in a given superposition and obtain the state.
Formally, the following theorem holds.

\begin{theorem}
    Suppose that we are given an access to an oracle $A$ that acts on a system consisting of a $n$-qubit register and a single-qubit register as
    \begin{equation}
        A\ket{0}\ket{0}=\sqrt{a}\ket{\phi_1}\ket{1}+\sqrt{1-a}\ket{\phi_0}\ket{0}=:\ket{\Phi}, \label{eq:A}
    \end{equation}
    where $\ket{\phi_0}$ and $\ket{\phi_1}$ are some quantum states on the register and $a\in[0,1)$.
    Then, for any $\gamma,\delta\in(0,1)$, there exists a quantum algorithm $\proc{QAA}(A,\gamma,\delta)$ that uses $O(n)$ qubits and behaves as follows:
    \begin{itemize}
        \item The output of the algorithm is either of
        \begin{enumerate}
            \renewcommand{\labelenumi}{(\Alph{enumi})}
            \item the message ``success" and the quantum state $\ket{\phi_1}$
            \item the message ``failure"
        \end{enumerate}
        \item If $a\ge\gamma$, the algorithm outputs (A) with probability at least $1-\delta$, making $O\left(\frac{\log\delta^{-1}}{\sqrt{a}}\right)$ queries to $A$.
        \item If $a<\gamma$, the algorithm outputs either (A) or (B), making $O\left(\frac{\log\delta^{-1}}{\sqrt{\gamma}}\right)$ queries to $A$.
    \end{itemize}
    \label{th:QAA}
\end{theorem}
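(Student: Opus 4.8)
The plan is to assemble the standard amplitude-amplification construction of \cite{brassard2002} and then wrap it in an ``exponential search'' over the number of amplification steps so as to tolerate an unknown value of $a$. First I would build the amplification operator $Q:=-AR_{0}A^{-1}R_{\chi}$, where $R_{\chi}$ is the reflection $R_{\chi}\ket{\psi}\ket{b}=(-1)^{b}\ket{\psi}\ket{b}$ on the flag qubit (a single $Z$ gate) and $R_{0}$ is the reflection of the $(n{+}1)$-qubit register that negates $\ket{0}\ket{0}$ and fixes every other computational-basis state (a multiply-controlled phase gate, implementable with $O(n)$ elementary gates and at most $O(n)$ ancillas, which are returned to $\ket{0}$). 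Writing $a=\sin^{2}\theta_{a}$ with $\theta_{a}\in(0,\pi/2]$, the standard two-dimensional invariant-subspace computation gives
\[
 Q^{k}\ket{\Phi}=\sin\!\big((2k+1)\theta_{a}\big)\ket{\phi_{1}}\ket{1}+\cos\!\big((2k+1)\theta_{a}\big)\ket{\phi_{0}}\ket{0},
\]
so measuring the flag qubit after applying $Q^{k}$ to $\ket{\Phi}=A\ket{0}\ket{0}$ returns outcome $1$ with probability $\sin^{2}((2k+1)\theta_{a})$, and \emph{conditioned on outcome $1$ the register is left exactly in the state $\ket{\phi_{1}}$}. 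This last point is crucial: the message ``success'' is never produced erroneously, so the only possible failure mode of the whole algorithm is not finding the good branch at all. If $\theta_{a}$ were known one would simply take $k=\lfloor\pi/(4\theta_{a})\rfloor=O(1/\sqrt{a})$; the work is in dispensing with that knowledge.

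For the unknown-$\theta_{a}$ case I would fix a constant $\lambda\in(1,4/3)$ (say $\lambda=6/5$) and a scale cap $L$ with $\lambda^{L}=\Theta(1/\sqrt{\gamma})$, so $L=O(\log(1/\gamma))$. For $\ell=0,1,\dots,L$, set $m_{\ell}=\lceil\lambda^{\ell}\rceil$ and repeat the following $\Theta(\log\delta^{-1})$ times: draw $j$ uniformly from $\{0,\dots,m_{\ell}-1\}$ (using $O^{\rm EqPr}_{m_{\ell}}$, or classical randomness), prepare $\ket{\Phi}$, apply $Q^{j}$, and measure the flag; if the outcome is $1$, halt, output ``success'', and keep the register state $\ket{\phi_{1}}$; otherwise reset all qubits and continue. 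If the loop finishes without any outcome $1$, output ``failure''. The one genuinely technical ingredient, taken from \cite{brassard2002}, is the estimate that once $m_{\ell}\gtrsim 1/\theta_{a}$ the average over the uniformly random $j$ satisfies $\mathbb{E}_{j}\big[\sin^{2}((2j+1)\theta_{a})\big]\ge\tfrac14$, so at every such scale a single repetition independently reveals the good branch with probability at least $1/4$. Since $\sqrt{a}\le\theta_{a}\le\tfrac{\pi}{2}\sqrt{a}$, the smallest scale $\ell_{c}$ with $m_{\ell_{c}}\gtrsim 1/\theta_{a}$ obeys $\lambda^{\ell_{c}}=\Theta(1/\sqrt{a})$, and the number of queries spent over scales $0,\dots,\ell$ is $\Theta(\log\delta^{-1})\sum_{\ell'\le\ell}m_{\ell'}=O(\lambda^{\ell}\log\delta^{-1})$ by the geometric series.

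The remaining bookkeeping is routine. If $a\ge\gamma$ then $1/\theta_{a}\le 1/\sqrt{a}\le 1/\sqrt{\gamma}$, so the scale $\ell_{c}$ is reached before the cap; among the $\Theta(\log\delta^{-1})$ repetitions at scale $\ell_{c}$ (and, if needed, at the few scales just above it, where the same $1/4$ bound holds) the probability that none returns outcome $1$ is at most $(3/4)^{\Theta(\log\delta^{-1})}\le\delta$, so with probability at least $1-\delta$ the algorithm halts with ``success'' by scale $\ell_{c}$, having made $O(\lambda^{\ell_{c}}\log\delta^{-1})=O(\log\delta^{-1}/\sqrt{a})$ queries; and since halting only ever happens with a correct state, the output is (A). If $a<\gamma$, the loop runs over all $L{+}1$ scales in the worst case, making $O(\lambda^{L}\log\delta^{-1})=O(\log\delta^{-1}/\sqrt{\gamma})$ queries and outputting either (A) (if some repetition happened to find the good branch) or (B). Finally, $Q$ acts on the same $n{+}1$ qubits as $A$ plus at most $O(n)$ ancillas, the schedule and randomness are classical, and qubits are reused across scales and repetitions, so the whole procedure uses $O(n)$ qubits. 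The main obstacle is the amplification-curve lemma of the second paragraph: controlling $\sin^{2}((2j+1)\theta_{a})$ averaged over $j$ requires care precisely in the regime where $(2j+1)\theta_{a}$ wraps past $\pi/2$, which is exactly why one randomizes $j$ instead of deterministically aiming for $k\approx\pi/(4\theta_{a})$, and why the scales must grow geometrically with the constant $\lambda<4/3$ so that the total query count stays proportional to the final scale actually used.
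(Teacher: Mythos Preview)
Your proposal is correct and takes essentially the same approach as the paper: both build the Grover operator $G=-AS_0A^{-1}S_\chi$, invoke the rotation formula $G^j\ket{\Phi}=\sin((2j+1)\theta_a)\ket{\phi_1}\ket{1}+\cos((2j+1)\theta_a)\ket{\phi_0}\ket{0}$, and run an exponential search over geometrically growing scales with a uniformly random iteration count and $\Theta(\log\delta^{-1})$ repetitions per scale, quoting the same averaging lemma from \cite{brassard2002} for the per-trial success probability. The only differences are cosmetic: the paper uses growth factor $c=3/2$ and adds an explicit initial loop of $\lceil\log_{1/4}\delta\rceil$ direct measurements of $\ket{\Phi}$ to dispatch the case $a\ge 3/4$; note also that your stated constraint $\lambda<4/3$ is not actually needed in your own analysis, since the per-scale repetition already guarantees success by scale $\ell_c$ with probability $1-\delta$, and the geometric series $\sum_{\ell'\le\ell}\lambda^{\ell'}=O(\lambda^{\ell})$ holds for any $\lambda>1$.
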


\begin{algorithm}[htp]
	\caption{$\proc{QAA}(A,\gamma,\delta)$, a modified version of \textbf{QSearch} in \cite{brassard2002} with $c=3/2$} 
	\label{alg:QAA}
	\begin{algorithmic}[1]
		\REQUIRE{$A$ in Eq.~(\ref{eq:A}), $G$ in Eq.~(\ref{eq:G}), $\gamma\in(0,1)$, $\delta\in(0,1)$}
		
		\STATE Set $L=\left\lceil\log_{\frac{3}{2}} \frac{3}{4\sqrt{\gamma}}\right\rceil$, $m_1=\left\lceil\log_{\frac{1}{4}} \delta\right\rceil$ and $m_2=\left\lceil\log_{\frac{5}{6}} \delta\right\rceil$.
		
		\FOR{$i=1$ to $m_1$} 

		    \STATE Generate $\ket{\Phi}$ and measure the single-qubit register. If the outcome is 1, output (A). 
		
		\ENDFOR
		
		\FOR{$l=1$ to $L$}
		
		\STATE Randomly choose an integer $j\in [M]$, where $M=\left\lceil\left(\frac{3}{2}\right)^l\right\rceil$.
		
		\FOR{$i=1$ to $m_2$}
		\STATE Generate $G^j\ket{\Phi}$ and measure the single-qubit register. If the outcome is 1, output (A).
		\ENDFOR
		    
        \ENDFOR
        
        \STATE Output (B).

	\end{algorithmic}
\end{algorithm}

The procedure of $\proc{QAA}(A,\gamma,\delta)$ is presented in Algorithm \ref{alg:QAA}.
Here, $G$, the so-called Grover operator, is defined as
\begin{equation}
    G:=-AS_0A^{-1}S_\chi.
\end{equation}
$S_\chi$ is an operator that acts as $S_\chi\ket{\psi}\ket{0}=\ket{\psi}\ket{0}$ and $S_\chi\ket{\psi}\ket{1}=-\ket{\psi}\ket{0}$, where $\ket{\psi}$ is any state on the $n$-qubit register, and implemented just as a Z gate on the single-qubit register.
The operator $S_0$ acts as $S_0\ket{0}\ket{0}=-\ket{0}\ket{0}$ and $S_0\ket{\Psi}=\ket{\Psi}$ for any other computational basis states $\ket{\Psi}$ on the system.
We can implement this using a multi-controlled Z gate.

Before presenting the proof of Theorem \ref{th:QAA}, let us roughly see how Algorithm \ref{alg:QAA} works.
We can show that, for any $j\in\mathbb{N}$,
\begin{equation}
    G^j\ket{\Phi}=\sin((2j+1)\theta_a)\ket{\phi_1}\ket{1}+\cos((2j+1)\theta_a)\ket{\phi_0}\ket{0} \label{eq:G}
\end{equation}
holds, where $\theta_a=\arcsin(\sqrt{a})$ \cite{brassard2002}.
Therefore, operating $G$ $O(1/\sqrt{a})$ times on $\ket{\Phi}$ makes the amplitude of $\ket{\phi_1}\ket{1}$ of order 1, which means high probability to obtain 1 on the single-qubit register.

Now, the proof of Theorem \ref{th:QAA} is as follows.

\begin{proof}[Proof of Theorem \ref{th:QAA}]
To begin with, note some differences between Algorithm \ref{alg:QAA} and \textbf{QSearch} in \cite{brassard2002}.
First, in Algorithm \ref{alg:QAA}, loop 5-10 \footnote{Here, loop $a$-$b$ means that the loop from line $a$ to line $b$ in Algorithm \ref{alg:QAA}.} has a bound $L$ on the iteration number, whereas \textbf{QSearch} has no bound in the corresponding loop and can run forever.
Second, Algorithm \ref{alg:QAA} repeats state generations and measurements in loop 2-4 and loop 7-9, whereas in \textbf{QSearch} they are not repeated.

Under these differences, Algorithm \ref{alg:QAA} behaves as follows.
If $a\ge \frac{3}{4}$, loop 2-4 outputs (A) with probability at least
\begin{equation}
1-\left(1-a\right)^{m_1}\ge 1-\left(\frac{1}{4}\right)^{m_1}\ge 1-\left(\frac{1}{4}\right)^{\log_{\frac{1}{4}} \delta}\ge 1-\delta.
\end{equation}
In this, $A$ is called at most $m_1=O(\log \delta^{-1})$ times, regardless of the value of $\gamma$.

On the other hand, if $\gamma\le a< \frac{3}{4}$, the algorithm works as follows.
Loop 7-9 with $l=\tilde{l}(a):=\left\lceil\log_{\frac{3}{2}} \frac{3}{4\sqrt{a}}\right\rceil$ outputs (A) with probability at least
\begin{equation}
    1-\left[1-\frac{1}{2}\left(1-\frac{1}{2M\sqrt{a}}\right)\right]^{m_2}=1-\left(\frac{1}{2}+\frac{1}{4M\sqrt{a}}\right)^{m_2}\ge 1-\left(\frac{1}{2}+\frac{1}{4\sqrt{a}\left(\frac{3}{2}\right)^{\tilde{l}(a)}}\right)^{m_2}\ge 1-\left(\frac{1}{2}+\frac{1}{3}\right)^{\log_{\frac{5}{6}}\delta}=1-\delta,
\end{equation}
since, according to \cite{brassard2002}, one run of line 8 outputs (A) with probability at least $\frac{1}{2}\left(1-\frac{1}{2M\sqrt{a}}\right)$ if $0<a<3/4$.
The number of queries to $A$ until we get (A) is evaluated as follows.
Since $G$ contains two calls to A, loop 7-9 with $l=l^\prime$ makes at most $O\left(m_2\left(\frac{3}{2}\right)^{l^\prime}\right)$ queries to $A$ for generation of $G^j\ket{\Phi}$.
Therefore, until we get (A), $A$ is called $O\left(\sum_{l=1}^{\tilde{l}(a)}\left(\frac{3}{2}\right)^lm_2\right)$ times, that is, $O\left(\log\delta^{-1}/\sqrt{a}\right)$ times.

In summary, if $\gamma\le a\le 1$, Algorithm \ref{alg:QAA} outputs (A) with probability at least $1-\delta$ making $O\left(\log\delta^{-1}/\sqrt{a}\right)$ queries to $A$.

To show the statement on the case that $a<\gamma$, we need only to show that the maximum number of queries to $A$ in this algorithm is $O(\log \delta^{-1}/\sqrt{\gamma})$.
This is actually true, since the number of queries to $A$ in loop 5-10 is $O\left(\sum_{l=1}^{L}\left(\frac{3}{2}\right)^lm_2\right)$, that is, $O(\log \delta^{-1}/\sqrt{\gamma})$, and adding $O(\log \delta^{-1})$ queries in loop 2-4 does not change the order.

The statement on qubit number is obvious, since every operation in Algorithm \ref{alg:QAA} is done by $A$ or $G$, which is an operator on the system consisting of a $n$-qubit register and a single qubit register.
\end{proof}

Let us make some comments on QAA.
First, note that QAA can be in fact regarded as an extension of Grover's algorithm, since the search problem in Theorem \ref{th:Grover} can be solved by QAA.
This is because we can generate the following state by $O^{\rm EqPr}_N$ and $O_F$
\begin{equation}
    \frac{1}{\sqrt{N}}\sum_{x\in\{0,1\}^n}\ket{x}\ket{F(x)}=\frac{1}{\sqrt{N}}\ket{x_{\rm tar}}\ket{1}+\frac{1}{\sqrt{N}}\sum_{x\in\{0,1\}^n\setminus\{x_{\rm tar}\}}\ket{x}\ket{0},
\end{equation}
which is in the form of Eq.~(\ref{eq:A}) with $\ket{\phi_1}=\ket{x_{\rm tar}}$.
Second, QAA provides a quadratic speedup like Grover's algorithm.
Instead of QAA, we can repeat generating $\ket{\Phi}$ and measuring the qubit until we get the measurement outcome 1 and the state $\ket{\phi_1}$.
This naive way yields $O(a^{-1})$ repetitions in expectation.
Therefore, QAA is quadratically faster than this.
Third, note that, in Theorem \ref{th:QAA}, the marked state is defined as a state in which some qubit takes $\ket{1}$.
Although in the original algorithm the marked state can be set more generally \cite{brassard2002}, the above setting is sufficient for the proposed algorithm for GW matched filtering, as we will see later.

\subsection{Quantum amplitude estimation}

Based on QAA, we can construct an algorithm called QAE for estimating the amplitude of a target state in a superposition state, or, more specifically, $a$ in the state like Eq.~(\ref{eq:A}) \cite{brassard2002}.
Roughly speaking, in the algorithm, we generate a superposition of states in the form of $G^j\ket{\Phi}$ with various values of $j$ by iteratively operating $G$ controlled by some register $R_{\rm QFT}$, and outputs an approximation of $\theta_a$ onto $R_{\rm QFT}$ by quantum Fourier transform (QFT)\footnote{There are some variants of QAE that rely on not QFT but iterative measurements and processing outcomes \cite{suzuki2020,aaronson2020,nakaji2020,grinko2021,tanaka2021,uno2021,giurgica2020,wang2021,tanaka2022,giurgica2021}. However, we do not use these in this paper since, in the proposed algorithm, we require QAE to be a unitary operation as a subroutine in QAA, as mentioned below.}.
Here, we do not enter details of the procedure but just present the following theorem, which is a modification of Theorem 12 in \cite{brassard2002}, without proof.

\begin{theorem}
    Suppose that we are given an access to an oracle $A$ in Eq.~(\ref{eq:A}).
    Then, for any integer $t$ larger than 2, there is an oracle $\tilde{O}_{A,t}^{\rm QAE}$ that acts as $\tilde{O}_{A,t}^{\rm QAE}\ket{0}=\sum_{y\in \mathcal{Y}} \alpha_y\ket{y}$, where some ancillary qubits are undisplayed.
    Here, $\mathcal{Y}$ is a finite set of real numbers that includes a subset $\tilde{\mathcal{Y}}$ consisting of elements $\tilde{a}$ satisfying
    \begin{equation}
        |\tilde{a}-a|\le \frac{2\pi\sqrt{a(1-a)}}{t}+\frac{\pi^2}{t^2}, \label{eq:QAEErr}
    \end{equation}
    and $\{\alpha_y\}_{y\in\mathcal{Y}}$ are complex numbers satisfying $\sum_{\tilde{y}\in\tilde{\mathcal{Y}}}|\alpha_{\tilde{y}}|^2\ge 8/\pi^2$.
    In $\tilde{O}_{A,t}^{\rm QAE}$, $O_{\mathcal{X}}$ is used $O\left(t\right)$ times and $O(n+\log t)$ qubits are used.
    \label{th:QAE}
\end{theorem}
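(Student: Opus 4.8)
The plan is to reproduce the quantum amplitude estimation circuit of \cite{brassard2002}, but to leave the estimate on a quantum register rather than reading it off by a measurement, so that the whole procedure is a genuine unitary oracle. Concretely, I would adjoin a register $R_{\rm QFT}$ of $\lceil\log_2 t\rceil$ qubits and proceed in three stages: (i) prepare the equiprobable superposition $\tfrac{1}{\sqrt t}\sum_{k\in[t]_0}\ket k$ on $R_{\rm QFT}$ by $O^{\rm EqPr}_t$ and $\ket\Phi$ on the system of $A$ by Eq.~(\ref{eq:A}); (ii) apply $G$ to the system of $A$ controlled by $R_{\rm QFT}$ so that the basis state $\ket k$ triggers $G^k$, implemented with $O(t)$ applications of $G$ via the binary-powering trick (controlled $G^{2^0},G^{2^1},\dots$); (iii) apply the inverse quantum Fourier transform over $\mathbb Z_t$ to $R_{\rm QFT}$. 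Using the eigenstructure behind Eq.~(\ref{eq:G})---on the two-dimensional subspace $\mathrm{span}\{\ket{\phi_1}\ket1,\ket{\phi_0}\ket0\}$ the operator $G$ is a rotation by $2\theta_a$, hence has eigenvalues $e^{\pm2i\theta_a}$ with eigenvectors $\ket{\psi_\pm}$ and $\theta_a=\arcsin\sqrt a$---the state after (ii) is, up to global phases, $\tfrac{1}{\sqrt2}\big(\tfrac{1}{\sqrt t}\sum_k e^{2ik\theta_a}\ket k\big)\ket{\psi_+}-\tfrac{1}{\sqrt2}\big(\tfrac{1}{\sqrt t}\sum_k e^{-2ik\theta_a}\ket k\big)\ket{\psi_-}$, so the inverse QFT concentrates $R_{\rm QFT}$ on integers $y$ with $y/t$ near $\theta_a/\pi$ in the $\ket{\psi_+}$ branch and near $1-\theta_a/\pi$ in the $\ket{\psi_-}$ branch.

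Next I would append the arithmetic oracles to compute the estimate $\tilde a:=\sin^2(\pi y/t)$ from the contents $y$ of $R_{\rm QFT}$ into a fresh output register (one multiplication and one division to form $\pi y/t$, then $O_{\sin}$, then a squaring by $O_{\rm mul}$), and take the resulting composite unitary to be $\tilde{O}_{A,t}^{\rm QAE}$. Applied to $\ket0$ it produces a superposition of computational-basis states of the output register, i.e.\ a state of the form $\sum_{y\in\mathcal Y}\alpha_y\ket y$ with $\mathcal Y=\{\sin^2(\pi k/t):k\in[t]_0\}$, the remaining registers ($R_{\rm QFT}$, the system of $A$, and the arithmetic scratch) being the undisplayed ancillas. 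Because $R_{\rm QFT}$ is retained, the two branches above stay entangled with the orthogonal states $\ket{\psi_+},\ket{\psi_-}$ and never interfere, so for any subset $\tilde{\mathcal Y}\subseteq\mathcal Y$ the weight $\sum_{\tilde y\in\tilde{\mathcal Y}}|\alpha_{\tilde y}|^2$ equals the probability that the measurement-based algorithm of \cite{brassard2002} reports a value in $\tilde{\mathcal Y}$. Taking $\tilde{\mathcal Y}$ to be the set of estimates within $\tfrac{2\pi\sqrt{a(1-a)}}{t}+\tfrac{\pi^2}{t^2}$ of $a$---onto which both branches pile their concentrated weight, since $\sin^2(\pi y/t)=\sin^2(\pi(t-y)/t)$---and quoting Theorem 12 of \cite{brassard2002} then gives Eq.~(\ref{eq:QAEErr}) together with $\sum_{\tilde y\in\tilde{\mathcal Y}}|\alpha_{\tilde y}|^2\ge 8/\pi^2$. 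Finally, stage (ii) makes $O(t)$ queries to $A$ because $G$ contains two calls to $A$, and the qubit count is $O(n)$ for the system of $A$, $O(\log t)$ for $R_{\rm QFT}$, and $O(N_{\rm dig})=O(1)$ for the arithmetic registers, i.e.\ $O(n+\log t)$ overall.

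The one genuinely non-mechanical point is the step where the terminal measurement of \cite{brassard2002} is replaced by a coherent computation of $\tilde a$: I must check that this does not alter the distribution of the reported estimate. It does not, because the law of a function of a measurement outcome coincides with the law obtained by coherently evaluating that function into a fresh register and measuring it---provided the register holding the raw outcome is not discarded beforehand, which is exactly why $R_{\rm QFT}$ is kept among the ancillas. Everything else is either a verbatim appeal to the accuracy analysis of \cite{brassard2002} or a routine counting of queries and qubits; a minor caveat is that for $t$ not a power of two the Fourier transforms over $\mathbb Z_t$ must be realized, exactly or to negligible error, within the same $O(\log t)$ qubits.
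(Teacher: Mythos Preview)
Your proposal is correct and takes essentially the same approach as the paper: the paper does not actually give a proof of this theorem but simply cites Theorem~12 of \cite{brassard2002} and remarks that one stops the \textbf{Est\_Amp} procedure before the final measurement so that the output is a quantum state rather than a classical estimate. Your write-up is a faithful and more detailed elaboration of exactly that idea, including the coherent post-processing $y\mapsto\sin^2(\pi y/t)$ and the observation that retaining $R_{\rm QFT}$ as an ancilla preserves the output distribution; the remaining query and qubit counts match.
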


Here are some comments.
In the original algorithm \textbf{Est\_Amp} in \cite{brassard2002}, we measure the state $\sum_{y\in \mathcal{Y}} \alpha_y\ket{y}$ and obtain an estimate on $a$.
However, we now stop the procedure at generation of the state, since, as explained below, we use QAE as a subroutine for the SNR calculation in searching high SNR templates by QAA and thus require it to be a unitary operation.
Besides, note that the statement on qubit number is obvious since \textbf{Est\_Amp} in \cite{brassard2002} uses only the register $R_{\rm QFT}$, which has $O(\log t)$ qubits, along with the system on which $A$ acts.

We often want to enhance the lower bound $8/\pi^2$ on the success probability of QAE to a given high value.
We can accomplish this thanks to the following theorem, which is Lemma 1 in \cite{montanaro2015} and originally Lemma 6.1 in \cite{jerrum1986}.
\begin{theorem}
    Let $\mu\in\mathbb{R}$ and $\epsilon\in\mathbb{R}_+$.
    Let $\mathcal{A}$ be an algorithm that outputs an $\epsilon$-approximation of $\mu$ with probability $\gamma\ge \frac{3}{4}$.
    Then, for any $\delta\in(0,1)$, the median of outputs in $12\left\lceil\log\delta^{-1}\right\rceil+1$ runs of $\mathcal{A}$ is an $\epsilon$-approximation of $\mu$ with probability at least $1-\delta$.
    \label{th:median}
\end{theorem}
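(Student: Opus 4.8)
The plan is the standard median-amplification argument. I would run $\mathcal{A}$ independently $N:=12\lceil\log\delta^{-1}\rceil+1$ times to obtain outputs $X_1,\dots,X_N$ and let $\tilde{X}$ be their median; since $N$ is odd this is unambiguously the $\frac{N+1}{2}$-th smallest of the $X_i$. For each $i$ I set $Y_i:=\mathbbm{1}_{|X_i-\mu|\le\epsilon}$, so that $Y_1,\dots,Y_N$ are i.i.d.\ Bernoulli random variables with mean $\gamma\ge\frac34$.

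The crux is an elementary counting fact: if strictly more than half of the samples are $\epsilon$-close to $\mu$, i.e.\ $\sum_{i=1}^N Y_i\ge\frac{N+1}{2}$, then the median $\tilde{X}$ is itself $\epsilon$-close to $\mu$. I would prove this by contradiction. If $\tilde{X}>\mu+\epsilon$, then $\tilde{X}$ together with every sample at least as large as it — a total of $\frac{N+1}{2}$ values — all exceed $\mu+\epsilon$, so at most $\frac{N-1}{2}$ of the samples can lie in $[\mu-\epsilon,\mu+\epsilon]$, contradicting the hypothesis; the case $\tilde{X}<\mu-\epsilon$ is symmetric. Hence the event that $\tilde{X}$ fails to be an $\epsilon$-approximation of $\mu$ is contained in the event $\bigl\{\sum_{i=1}^N Y_i\le\frac{N-1}{2}\bigr\}$.

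It then remains to show $\Pr\!\bigl[\sum_{i=1}^N Y_i\le\frac{N-1}{2}\bigr]\le\delta$. Since $\mathbb{E}\bigl[\sum_i Y_i\bigr]=N\gamma\ge\frac{3N}{4}$, the sum must fall below its mean by at least $\frac{N}{4}$, and Hoeffding's inequality gives $\Pr\!\bigl[\sum_i Y_i\le\frac{N}{2}\bigr]\le\exp\!\bigl(-2(N/4)^2/N\bigr)=e^{-N/8}$. Substituting $N\ge 12\log\delta^{-1}$ yields $e^{-N/8}\le e^{-\frac32\log\delta^{-1}}=\delta^{3/2}\le\delta$, which completes the proof.

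There is no deep obstacle here; the only two points needing care are the counting argument for the median (getting the inequalities right for odd $N$) and matching the constants — the $12$ in the sample count against the $\tfrac18$ in the exponent of the tail bound — so that the final estimate is $\le\delta$. In particular I would make sure to invoke the additive Hoeffding bound rather than the simple multiplicative Chernoff bound $e^{-\mu\eta^2/2}$, since the latter, with these constants, only yields $\delta^{1/2}$, which is too weak.
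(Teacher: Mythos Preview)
Your argument is correct: the median-counting lemma plus the additive Hoeffding bound with $\gamma\ge\tfrac34$ gives the tail probability $e^{-N/8}\le\delta^{3/2}\le\delta$, exactly as you computed. The paper does not supply its own proof of this statement but simply cites it as Lemma~1 of Montanaro (2015), originally Lemma~6.1 of Jerrum--Valiant--Vazirani (1986); your proof is the standard one found in those references, so there is nothing further to compare.
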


This implies the following.
Letting $N$ be an integer larger than $12\left\lceil\log\delta^{-1}\right\rceil+1$, we generate the state
\begin{equation}
    \ket{\Psi_N}:=\sum_{y_1,...,y_N\in \mathcal{Y}} \alpha_{y_1}\cdots\alpha_{y_N}\ket{y_1}\cdots\ket{y_N}\ket{{\rm med}(y_1,...,y_N)}
\end{equation}
by operating $\tilde{O}_{A,t}^{\rm QAE}$ on each of the first $N$ registers and using $O^{\rm med}_N$.
We rewrite this state as $\ket{\Psi_N}:= \alpha_{\le\epsilon}\ket{\psi_{\le\epsilon}}+\alpha_{>\epsilon}\ket{\psi_{>\epsilon}}$ with $\alpha_{\le\epsilon},\alpha_{>\epsilon}\in\mathbb{C}$ and the states
\begin{equation}
    \ket{\psi_{\le\epsilon}}:= \sum_{\substack{\tilde{a}\in \mathcal{Z} \\ |\tilde{a}-a|\le \frac{2\pi\sqrt{a(1-a)}}{t}+\frac{\pi^2}{t^2}}} \beta_{\tilde{a}}\ket{\phi_{\tilde{a}}}\ket{\tilde{a}}\,,\qquad \ket{\psi_{>\epsilon}}:= \sum_{\substack{\tilde{a}\in \mathcal{Z} \\ |\tilde{a}-a|> \frac{2\pi\sqrt{a(1-a)}}{t}+\frac{\pi^2}{t^2}}} \beta_{\tilde{a}}\ket{\phi_{\tilde{a}}}\ket{\tilde{a}},
\end{equation}
where $\mathcal{Z}$ is some finite set of real numbers, $\{\beta_{\tilde{a}}\}_{\tilde{a}\in\mathcal{Z}}$ are complex numbers, and $\{\ket{\phi_{\tilde{a}}}\}_{\tilde{a}\in\mathcal{Z}}$ are states on the first $N$ registers.
Then, $|\alpha_{\le\epsilon}|^2$, the squared amplitude of the state $\ket{\psi_{\le\epsilon}}$, in which the number $\tilde{a}$ on the last register satisfies Eq.~(\ref{eq:QAEErr}), is larger than $1-\delta$.
This technique is used in the quantum algorithm for Monte Carlo integration, which is explained next.

\subsection{Quantum Monte Carlo integration}

On the basis of QAE, we can construct a quantum algorithm for Monte Carlo integration \cite{montanaro2015}, which we hereafter call QMCI.
Although Monte Carlo integration is generally a method to estimate integrals, we now consider it as a method to estimate the mean of $\mathcal{X}$ a given set of real numbers, since it is sufficient for the proposed algorithm.
Among some versions presented in \cite{montanaro2015}, we use the one for the situation where an upper bound on ${\rm Var}(\mathcal{X})$ is given.

\begin{theorem}
    Let $N\in\mathbb{N}$ and $\mathcal{X}$ be a set of $N$ real numbers, $X_0,...,X_{N-1}$, whose mean is $\mu:=\frac{1}{N}\sum_{i=0}^{N-1}X_i$ and sample variance ${\rm Var}(\mathcal{X})$ satisfies ${\rm Var}(\mathcal{X})\le\sigma^2$ with $\sigma\in\mathbb{R}_+$.
    Suppose that we are given an oracle $O_X$ that acts on a system with $O(\log N)$ qubits in total as
    \begin{equation}
        O_{\mathcal{X}}\ket{i}\ket{0}=\ket{i}\ket{X_i}, \label{eq:OX}
    \end{equation}
    for any $i\in[N]_0$.
    Let $\epsilon\in(0,4\sigma)$ and $\delta\in(0,1)$.
    Then, there is an oracle $O_{\mathcal{X},\epsilon,\delta,\sigma}^{\rm mean}$ such that
    \begin{equation}
        O_{\mathcal{X},\epsilon,\delta,\sigma}^{\rm mean}\ket{0}=\sum_{y\in \mathcal{Y}} \alpha_y\ket{y}, \label{eq:ObarX}
    \end{equation}
    where some ancillary qubits are undisplayed.
    Here, $\mathcal{Y}$ is a finite set of real numbers that includes a subset $\tilde{\mathcal{Y}}$ consisting of $\epsilon$-approximations of $\mu$ and $\{\alpha_y\}_{y\in\mathcal{Y}}$ are complex numbers satisfying $\sum_{\tilde{y}\in\tilde{\mathcal{Y}}}|\alpha_{\tilde{y}}|^2\ge 1-\delta$.
    In $O_{\mathcal{X},\epsilon,\delta,\sigma}^{\rm mean}$,
    \begin{equation}
        O\left(\frac{\sigma}{\epsilon}\log^{3/2}\left(\frac{\sigma}{\epsilon}\right)\log\log\left(\frac{\sigma}{\epsilon}\right)\log\left(\frac{1}{\delta}\right)\right) \label{eq:compQMCI}
    \end{equation}
    queries to $O_{\mathcal{X}}$ are made and
    \begin{equation}
        O\left(\left(\log N+\log\left(\frac{\sigma}{\epsilon}\right)\right)\log\left(\frac{\sigma}{\epsilon}\right)\log\log\left(\frac{\sigma}{\epsilon}\right)\log\delta^{-1}\right) \label{eq:qubitQMCI}
    \end{equation}
    qubits are used.
    \label{th:QMCI}
\end{theorem}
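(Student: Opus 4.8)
The plan is to realize Montanaro's bounded‑variance mean‑estimation algorithm~\cite{montanaro2015} as a single \emph{unitary} state preparation, assembled from the primitives already in hand: $O^{\rm EqPr}_N$, the arithmetic oracles, $O_{\rm CY}$, the QAE oracle $\tilde{O}^{\rm QAE}_{A,t}$ of Theorem~\ref{th:QAE}, and the median amplification of Theorem~\ref{th:median}. First I would normalize the problem: composing $O_{\mathcal{X}}$ with $O_{\rm div}$ gives an oracle loading $X_i/\sigma$, whose underlying set has sample variance $\le1$ and mean $\mu/\sigma$; producing an $(\epsilon/\sigma)$‑approximation of $\mu/\sigma$ and then multiplying back by $\sigma$ with $O_{\rm mul}$ yields an $\epsilon$‑approximation of $\mu$. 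So henceforth I assume $\sigma=1$, ${\rm Var}(\mathcal{X})\le1$, and target accuracy $\epsilon\in(0,4)$ (relabelled).

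The core building block is a unitary $U_{\phi,\eta,\delta'}$ which, for any $\phi:[N]_0\to[0,1]$ given by an arithmetic oracle, prepares a state on which the squared amplitude of $\eta$‑approximations of $\bar\phi:=\frac{1}{N}\sum_{i}\phi(i)$ is at least $1-\delta'$. I would build it by: applying $O^{\rm EqPr}_N$ to create $\frac{1}{\sqrt{N}}\sum_{i}\ket{i}$; computing $\phi(i)$ and then the angle $2\arcsin\sqrt{\phi(i)}$ with the arithmetic and elementary‑function oracles; applying $O_{\rm CY}$ to a fresh qubit to reach a state of the form of Eq.~(\ref{eq:A}) with $a=\bar\phi$; applying $\tilde{O}^{\rm QAE}_{A,t}$ with $t=\Theta(1/\eta)$, so that by Eq.~(\ref{eq:QAEErr}) — using $\sqrt{a(1-a)}\le1/2$ — a subset of weight $\ge8/\pi^2>3/4$ carries $\eta$‑approximations of $\bar\phi$; and finally forming $O(\log(1/\delta'))$ coherent copies of this state and applying $O^{\rm med}$, which by Theorem~\ref{th:median} lifts the good weight to $\ge1-\delta'$. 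This block makes $\widetilde{O}((1/\eta)\log(1/\delta'))$ calls to $O_{\mathcal{X}}$ and uses $O((\log N+\log(1/\eta))\log(1/\delta'))$ qubits, and — the point that will matter at the end — it is a genuine unitary on a fixed register set with no intermediate measurement.

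Next I would assemble the estimator along Montanaro's lines. A preliminary coarse‑estimation stage, following~\cite{montanaro2015}, produces $m$ with $|m-\mu|=O(1)$ using $\widetilde{O}(1)$ queries; replacing $X$ by $X-m$, one may then assume $\mathbb{E}[X^2]=O(1)$. Then $X$ is split into its positive and negative parts, each further split into $L=O(\log(1/\epsilon))$ telescoping dyadic scales $w_\ell$ (supported essentially on $\{2^{\ell-1}\le|X|<2^{\ell}\}$, plus a base piece on $\{|X|<1\}$); the tail beyond scale $L$ shifts the mean by at most $\epsilon$ since $\mathbb{E}[|X|\mathbbm{1}_{|X|\ge2^{L}}]\le\mathbb{E}[X^2]/2^{L}$, and the same moment bound gives $\Pr[|X|\ge2^{\ell-1}]=O(4^{-\ell})$, hence $\bar\phi_\ell:=\mathbb{E}[w_\ell/2^{\ell-1}]=O(4^{-\ell})$. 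Applying $U_{\phi_\ell,\eta_\ell,\delta/(2(L+1))}$ to each $\phi_\ell=w_\ell/2^{\ell-1}$ and recombining $m+\sum_\ell(\pm)2^{\ell-1}\times(\text{estimate of }\bar\phi_\ell)$ with $O_{\rm mul}$ and $O_{\rm add}$ produces the output register. Choosing the per‑scale accuracies $\eta_\ell$ (with $\sum_\ell2^{\ell-1}\eta_\ell=O(\epsilon)$) and confidence levels so as to balance the two terms of Eq.~(\ref{eq:QAEErr}) — exploiting the $O(4^{-\ell})$ bound on $\bar\phi_\ell$ to tame the first — against the error and failure budgets, then summing $\sum_\ell t_\ell$ over scales together with the $O(\log\log(1/\epsilon)+\log\delta^{-1})$ median copies per scale, reproduces the query bound Eq.~(\ref{eq:compQMCI}); running the $L+1$ scales on disjoint registers, each a stack of $O(\log\log(1/\epsilon)+\log\delta^{-1})$ QAE instances on $O(\log N+\log(1/\epsilon))$ qubits, gives the qubit bound Eq.~(\ref{eq:qubitQMCI}). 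The precise allocation of the $\eta_\ell$ and the tracking of the logarithmic factors is the routine‑but‑delicate part, for which I would follow~\cite{montanaro2015} in detail.

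For the statement itself, the key remark is that each ingredient above is a unitary with no intermediate measurement, so their composition is the asserted oracle $O^{\rm mean}_{\mathcal{X},\epsilon,\delta,\sigma}$ sending $\ket{0}$ to $\sum_{y}\alpha_y\ket{y}$ on the output register, and the squared amplitude on $\epsilon$‑approximations of $\mu$ is at least $1-\delta$ by a union bound over the $O(\log(1/\epsilon))$ scales. I expect the main obstacle to be exactly maintaining this coherence: Montanaro's original algorithm measures and post‑selects between the scale estimates and between median repetitions, so removing those measurements forces one to check that the ``bad'' contributions — now merely unamplified amplitude spread over the arcsin/QAE ancillas and the $L+1$ parallel median blocks — still sum to weight $\le\delta$, i.e.\ that the good/bad dichotomy is determined solely by the contents of the final output register, so that no uncomputation of the auxiliary registers is required for the statement as phrased and the oracle can later be used as a black box inside QAA.
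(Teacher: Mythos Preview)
Your approach is essentially the paper's: both realize Montanaro's bounded-variance estimator as a single unitary by replacing every measurement with a parallel state preparation followed by a coherent $O^{\rm med}$, and both rely on the same dyadic truncation and the $\bar\phi_\ell=O(4^{-\ell})$ bound to control the $\sqrt{a}/t$ term in Eq.~(\ref{eq:QAEErr}).

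The one structural difference worth noting is how the failure probabilities are organized. You set each per-scale confidence to $\delta/(2(L+1))$ and take a union bound over scales. The paper instead uses a \emph{nested} median: an inner median of $K=O(\log L)=O(\log\log(\sigma/\epsilon))$ copies at each scale (boosting each scale only to success $1-O(1/L)$), and then an outer median of $J=O(\log\delta^{-1})$ copies of the \emph{entire} pipeline, including the coarse centering step. That outer repetition is what absorbs the $\Omega(1)$ failure probability of the single-sample coarse estimate $\tilde m=X_{i}/\sigma$: by Chebyshev a single sample satisfies $|\tilde m-\mu/\sigma|\le3$ only with probability $\ge8/9$, so the whole inner block succeeds with probability $\ge\frac{8}{9}\cdot\frac{27}{32}=\frac{3}{4}$, and the outer $J$-fold median then lifts this to $1-\delta$. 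Your write-up glosses over this point --- as stated, your union bound over scales alone would leave an unremoved $\Omega(1)$ failure coming from the coarse stage. The fix is immediate (either wrap the whole estimator in a $J$-fold outer median as the paper does, or boost the coarse estimate separately via the median of $O(\log\delta^{-1})$ samples), and the resulting query and qubit counts match Eqs.~(\ref{eq:compQMCI}) and~(\ref{eq:qubitQMCI}).
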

We present the proof in \ref{sec:proofQMCI}.

\section{Quantum algorithm for gravitational wave matched filtering \label{sec:main}}

\subsection{Problem and assumptions}

Equipped with the above quantum algorithms, we now consider applying them to GW matched filtering.
We start from formally stating the problem we consider. 

\begin{problem}
Let $T,M$ and $N_{\rm temp}$ be a positive real number, a positive even integer and a positive integer, respectively.
Define $\Delta t:=T/M$ and, for $k\in[M/2-1]$, $f_k:=k/T$.
Suppose that we are given a complex sequence $\{\tilde{s}(f_k)\}_{k\in[M/2-1]}$, a function $S_n:\mathbb{R}_+\rightarrow\mathbb{R}_+$, and, for every $m\in[N_{\rm temp}]_0$, a function $\tilde{Q}_m:\mathbb{R}_+\rightarrow\mathbb{C}$.
Then, determine whether there exists any $(m,j)\in[N_{\rm temp}]_0\times\left[M\right]_0$ such that $\rho_{m,j}$ in Eq.~(\ref{eq:SNRdisc}) exceeds some given value or not.
If there are such integer pairs, find one of them.
\label{prob:GWMF}
\end{problem}

We need some preparations to tackle this problem.
First, let us define some quantities for convenience.
The first one is as follows: for $\rho\in\mathbb{R}_+$,
\begin{equation}
	\tilde{r}(\rho):=\frac{\#\left\{(m,j)\in[N_{\rm temp}]_0\times[M]_0 \ \middle| \ \rho_{m,j}\ge\rho \right\}}{N_{\rm temp}M}.
\end{equation}
Obviously, $\tilde{r}(\rho)$ represents the fraction of templates that give SNRs larger than $\rho$ and is an analog of $r(\rho)$ in Eq.~(\ref{eq:r}).
The next one is about the magnitude of template functions $\tilde{Q}_m$ normalized by $S_{\rm n}$: 
\begin{equation}
    \gamma:=\max_{(m,k)\in[N_{\rm temp}]_0\times\left[\frac{M}{2}-1\right]}\frac{|\tilde{Q}_m(f_k)|}{\sqrt{S_{\rm n}(f_k)\Delta t}}.
    \label{eq:gamma_def}
\end{equation}
As we will discuss in Section \ref{sec:gamma}, we expect that this is of order $O(1)$.

Next, let us make some assumptions needed to discuss the quantum algorithm and its complexity.
The first one is about availability of some fundamental oracles.
\begin{assum}	
	We have accesses to oracles $O_{\rm Re}$ and $O_{\rm Im}$ such that, for every $(m,k)\in[N_{\rm temp}]_0\times\left[\frac{M}{2}\right]_0$,
 	\begin{eqnarray}
		O_{\rm Re}\ket{m}\ket{k}\ket{0}&=&
		\begin{cases}
		\ket{m}\ket{k}\ket{0} & ; \ {\rm if} \ k=0 \\
		\ket{m}\ket{k}\Ket{\Re\left(\frac{2\tilde{Q}^*_m(f_k)\tilde{s}(f_k)}{S_{\rm n}(f_k)\Delta t}\right)} &;  \ {\rm otherwise}
		\end{cases},
		\nonumber \\
		O_{\rm Im}\ket{m}\ket{k}\ket{0}&=&
		\begin{cases}
		\ket{m}\ket{k}\ket{0} & ; \ {\rm if} \ k=0 \\
		\ket{m}\ket{k}\Ket{\Im\left(\frac{2\tilde{Q}^*_m(f_k)\tilde{s}(f_k)}{S_{\rm n}(f_k)\Delta t}\right)} & ; \ {\rm otherwise}
		\end{cases}
	\end{eqnarray}
	\label{ass:oracle}
\end{assum}
We will discuss how to implement these in Section \ref{sec:oracle}.
The next assumption is on the mean and the variance of detector outputs used in the SNR calculation.
\begin{assum}
    \begin{eqnarray}
        &&\left({\rm Mean}\left(\left\{\frac{2\Re\tilde{s}(f_k)}{\sqrt{S_{\rm n}(f_k)T}}\right\}_{k=1,...,M/2-1}\right)\right)^2\le1, \nonumber \\
        &&\left({\rm Mean}\left(\left\{\frac{2\Im\tilde{s}(f_k)}{\sqrt{S_{\rm n}(f_k)T}}\right\}_{k=1,...,M/2-1}\right)\right)^2\le1, \nonumber \\
        &&{\rm Var}\left(\left\{\frac{2\Re\tilde{s}(f_k)}{\sqrt{S_{\rm n}(f_k)T}}\right\}_{k=1,...,M/2-1}\right)\le4, \nonumber \\
        &&{\rm Var}\left(\left\{\frac{2\Im\tilde{s}(f_k)}{\sqrt{S_{\rm n}(f_k)T}}\right\}_{k=1,...,M/2-1}\right)\le4.
    \end{eqnarray}
    \label{ass:hVar}
\end{assum}
We will discuss the validity of this in Section \ref{sec:hvar}.
As we will see in Section \ref{sec:propAlg}, this assumption is important for the proposed quantum algorithm, since it leads to the upper bound on the variance of summands in the SNR calculation, with which we can use QMCI for variables with bounded variance.

\subsection{Previous algorithm} \label{sec:previous}

Before the new quantum algorithm, we review the algorithm proposed in \cite{gao2021}.
It is shown as Algorithm \ref{alg:prev}, which is a modified version of Algorithm 2 in \cite{gao2021}.
Given a SNR threshold $\rho_{\rm th}$, this algorithm outputs the message "there is a signal" and $m\in[N_{\rm temp}]_0$ such that $\rho_m\ge\rho_{\rm th}$ with probability at least $1-\delta$, if there exists such $m$.
Note that Algorithm \ref{alg:prev} in our paper uses QAA instead of QAE on $O^\prime_{\rm FFT}\ket{0}\ket{0}\ket{0}$ in Algorithm 2 of \cite{gao2021}, which does not affect the scaling of the complexity on $M$ and $N_{\rm temp}$.
Also note that we can implement $O_{\rm FFT}$ in Eq.~(\ref{eq:OFFT}) by arithmetic oracles, since FFT is actually a sequence of arithmetic operations\footnote{In fact, implementation of FFT as a quantum circuit has been studied in \cite{asaka2020}.}. 

\begin{algorithm}[htp]
	\caption{Previous algorithm for GW matched filtering (modified)} 
	\label{alg:prev}
	\begin{algorithmic}[1]
		\REQUIRE{\ \\
		$\delta\in(0,1)$.\\
		$\rho_{\rm th}\in\mathbb{R}_+$ the SNR threshold. \ \\
		An oracle $O_{\rm FFT}$ such that, for every $m\in[N_{\rm temp}]_0$,
		\begin{equation}
		O_{\rm FFT}\ket{m}\ket{0}=\ket{m}\ket{\rho_m}. \label{eq:OFFT}
		\end{equation}
		}
		
		\STATE Combining $O^{\rm EqPr}_{N_{\rm temp}}$, $O_{\rm FFT}$ and a comparer, construct an oracle $O^\prime_{\rm FFT}$ that acts as
		\begin{equation}
		    O^\prime_{\rm FFT}\ket{0}\ket{0}\ket{0}=\frac{1}{\sqrt{N_{\rm temp}}}\sum_{m=0}^{N_{\rm temp}-1}\ket{m}\ket{\rho_m}\ket{\mathbbm{1}_{\rho_m\ge\rho_{\rm th}}}.
		\end{equation}
		
		\STATE Run $\proc{QAA}\left(O^\prime_{\rm FFT},\frac{1}{N_{\rm temp}},\delta\right)$.
        
        \IF{we get the message ``failure"}
            \STATE Output the message "there is no signal".
        \ELSE 
            \STATE Measure the first register in the quantum state output by QAA and let the outcome be $m$.
            \STATE Calculate $\rho_{m}$ classically by FFT.
            \IF{$\rho_{m}\ge\rho_{\rm th}$}
                \STATE Output the message "there is a signal" and $m$.
            \ELSE
                \STATE Output the message "there is no signal".
            \ENDIF
        \ENDIF

	\end{algorithmic}
\end{algorithm}

The number of queries to arithmetic oracles in this algorithm is\footnote{Although Eqs.~(\ref{eq:compPrev1}) and (\ref{eq:compPrev2}) do not have a factor $\log N_{\rm temp}$, 
while Eq.~(35) in \cite{gao2021} has, we omit this reasonably assuming that $M\log M \gg\log N_{\rm temp}$.}
\begin{equation}
    O\left(\frac{M\log M}{\sqrt{r(\rho_{\rm th})}}\right), \label{eq:compPrev1}
\end{equation}
when $r(\rho_{\rm th})>0$ and
\begin{equation}
    O(\sqrt{N_{\rm temp}}M\log M) \label{eq:compPrev2}
\end{equation}
when $r(\rho_{\rm th})=0$.
We can see this as follows.
QAA makes $O(1/\sqrt{r(\rho_{\rm th})})$ calls to $O^\prime_{\rm FFT}$ and thus to $O_{\rm FFT}$ when $r(\rho_{\rm th})>0$, and $O(\sqrt{N_{\rm temp}})$ calls to them when $r(\rho_{\rm th})=0$.
Besides, the number of queries to arithmetic oracles in $O_{\rm FFT}$ is of order $O(M\log M)$ like the number of floating-point operations in FFT on a classical computer.
Combining these, we get the complexity bounds in Eqs.~(\ref{eq:compPrev1}) and (\ref{eq:compPrev2}).
They show a quadratic speedup over the classical complexity of  Eqs.~(\ref{eq:compClass1}) and (\ref{eq:compClass2}) with respect to $1/r(\rho_{\rm th})$ and $N_{\rm temp}$.

When it comes to qubit number, Algorithm 2 uses $O(M)$ qubits, since FFT calculates $\rho_{m,0},...,\rho_{m,M-1}$ simultaneously and thus use $O(M)$ registers to store intermediate and final calculation results.

\subsection{Proposed algorithm and its complexity \label{sec:propAlg}}

\subsubsection{Idea \label{sec:idea}}

The previous algorithm for GW matched filtering explained above uses FFT for the SNR calculation.
On the other hand, from the formula Eq.~(\ref{eq:SNRdisc}) for SNR, we conceive the following idea: can we use QMCI for the SNR calculation?
As we will see later, we can construct an oracle $O_\rho$ to calculate the summand $\tilde{\rho}_{m,j,k}$ in the SNR calculation in Eq.~(\ref{eq:SNRdisc}) making $O(1)$ uses of $O_{\rm Re}$ and $O_{\rm Im}$, and thus apply QMCI following Theorem \ref{th:QMCI}.
If we can set a bound $\sigma^2$ on the sample variance of $\{\tilde{\rho}_{m,j,k}\}_{k=1,...,\frac{M}{2}-1}$ and the accuracy $\epsilon$ in the SNR calculation, the query complexity with respect to $O_{\rm Re}$ and $O_{\rm Im}$ is of order $\widetilde{O}(\sigma/\epsilon)$.

This QMCI-based approach has a benefit on the qubit number reduction.
As shown in Eq.~(\ref{eq:qubitQMCI}), the number of qubits QMCI uses scales on $M$ as $O({\rm polylog} M)$, since the SNR given as Eq.~(\ref{eq:SNRdisc}) is a mean of $O(M)$ terms.
This means large reduction compared to the number required by FFT, which is of order $O(M)$.
This provides a large benefit, since quantum computers will have a limitation on qubit capacity even in the future as mentioned in Introduction.

When it comes to the query complexity, the QMCI-based method is roughly same as the previous FFT-based algorithm.
One might concern that the proposed method might worsen the scaling on $M$ due to the expansion of the space searched by QAA.
Unlike FFT, which simultaneously calculates $\rho_{m,j}$ for all $j\in[M]_0$, QMCI is performed for each $j$.
Thus, the search in the parameter space of $m\in[N_{\rm temp}]_0$ to find a large $\rho_m$ in the previous algorithm is replaced with the search in a larger parameter space of $(m,j)\in[N_{\rm temp}]_0\times[M]_0$ to find a large $\rho_{m,j}$.
This means that the iteration number in QAA increases from $\widetilde{O}(1/\sqrt{r(\rho_{\rm th})})$ to $\widetilde{O}(1/\sqrt{\tilde{r}(\rho_{\rm th})})$, which is $\widetilde{O}(\sigma\sqrt{M}/\epsilon\sqrt{r(\rho_{\rm th})})$ at maximum since $\tilde{r}(\rho_{\rm th}) > r(\rho_{\rm th})/M$.
Combining the aforementioned query complexity of the SNR calculation by QMCI, we bound the total number of queries to $O_{\rm Re}$ and $O_{\rm Im}$ as $\widetilde{O}\left(\sigma\sqrt{M}/\epsilon\sqrt{r(\rho_{\rm th})}\right)$.
Fortunately, since we can set $\sigma=O(\sqrt{M})$ and $\epsilon=O(1)$ as we will see below, the query complexity is resultingly $\widetilde{O}(M/\sqrt{r(\rho_{\rm th})})$, which is same as the number of queries to arithemetic oracles in the FFT-based method.
Note that this is a reasonable comparison, since $O_{\rm Re}$ and $O_{\rm Im}$ are constructed by arithmetic oracles, with the aid of the quantum random access memory (QRAM) \cite{giovannetti2008}, as explained later.

We should also note that the nature of QMCI causes the following issue.
The output of QMCI inevitably accompanies an error, and thus, even if the SNR of a given template calculated by QMCI exceeds the threshold $\rho_{\rm th}$, its true SNR might be below $\rho_{\rm th}$.
We may think that we can evade such a misjudge by setting the accuracy $\epsilon$ in QMCI extremely small, but it comes with large complexity.
Therefore, we need to reasonably set the accuracy: following the nature of the problem under consideration, we should derive the error tolerance for SNR and set the QMCI accuracy matching it.

For this, we propose the following way.
We consider GW matched filtering as a system that alarms us when the detector output seems to contain a signal.
Besides, we consider the two levels of SNR threshold denoted by $\rho_{\rm hard}$ and $\rho_{\rm soft}$, which have the following meaning.

\begin{itemize}
    \item If some templates have SNR $\rho\ge \rho_{\rm hard}$ for a given detector output, we want to be alarmed with certainty.
    \item We never want to be falsely alarmed when all templates have SNR $\rho<\rho_{\rm soft}$.
    \item 
    When no template has SNR $\rho\ge \rho_{\rm hard}$ but some have $\rho\in[\rho_{\rm soft},\rho_{\rm hard})$, it is not needed but fine to be alarmed.
\end{itemize}
\noindent In this situation, we set the QMCI accuracy to $(\rho_{\rm hard}-\rho_{\rm soft})/2$ and judge a template as matched if its SNR calculated by QMCI exceeds $(\rho_{\rm hard}+\rho_{\rm soft})/2$ and mismatched otherwise.
In this strategy, with high probability, a template with a true SNR $\rho\ge \rho_{\rm hard}$ is judged as matched and that with a true SNR $\rho< \rho_{\rm soft}$ is judged as mismatched.
A template with a true SNR $\rho_{\rm soft}\le\rho<\rho_{\rm hard}$ is an intermediate case where the data may contain a signal near the threshold and can be judged as either matched or mismatched due to the QMCI error.
We present an illustration of this strategy in Figure \ref{fig:twothres}.
We discuss how to set the two levels $\rho_{\rm hard}$ and $\rho_{\rm soft}$ in Section \ref{sec:twoThres}.

Note that "match" and "mismatch" discussed here are the result of quantum computation for a given detector output, and do not necessarily mean the result correctly indicates whether there is a GW signal or not.
In fact, a false detection can occur due to the random instrumental noise, which is not taken into account here.
It is, on the other hand, relevant to how we set the thresholds of  $\rho_{\rm hard}$ and $\rho_{\rm soft}$.
See Section \ref{sec:twoThres} for more detailed discussion. 

Based on this idea, we design a quantum algorithm for GW matched filtering in the following part.

\begin{figure}[tp]
		\centering
		\includegraphics[scale=0.8]{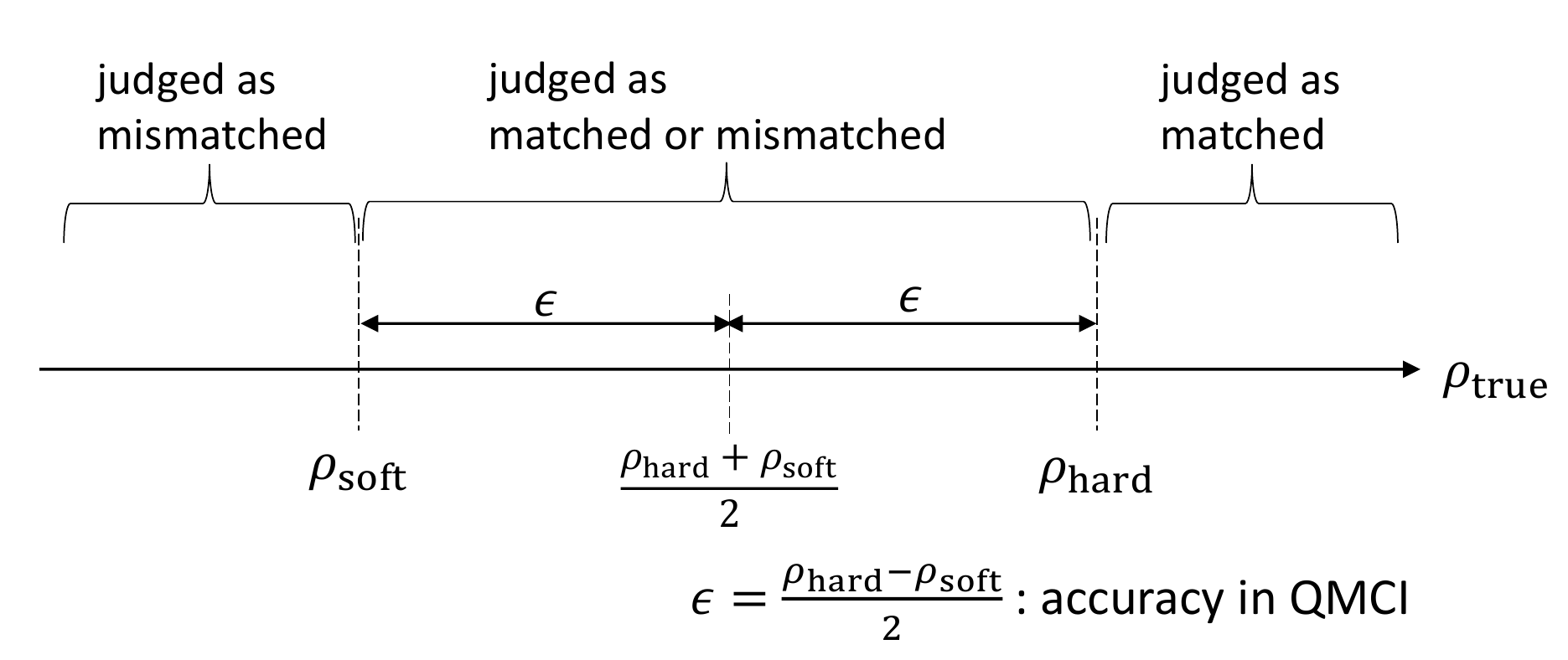}
		\caption{An illustration of the current strategy. For every template, if the SNR calculated by QMCI exceeds $(\rho_{\rm hard}+\rho_{\rm soft})/2$, we judge it as matched, and otherwise we judge it as mismatched. With QMCI accuracy $(\rho_{\rm hard}-\rho_{\rm soft})/2$, this leads to a correct judgement for templates with a true SNR $\rho_{\rm true}<\rho_{\rm soft}$ and $\rho_{\rm true}\ge\rho_{\rm hard}$ with high probability.}
	\label{fig:twothres}
\end{figure}

\subsubsection{Supporting lemma on the variance of summands in the SNR calculation}

Here, as a preparation to present the new algorithm, let us prove the following lemma on the variance of summands in the SNR calculation, which follows from Assumption \ref{ass:hVar}.

\begin{lemma}
    Under Assumption \ref{ass:hVar},
    \begin{equation}
        {\rm Var}\left(\{\tilde{\rho}_{m,j,k}\}_{k=0,...,M/2-1}\right)\le 80M\gamma^2 \label{eq:varrho}
    \end{equation}
    holds for every $(m,j)\in[N_{\rm temp}]_0\times\left[M\right]_0$.
    \label{lem:varrho}
\end{lemma}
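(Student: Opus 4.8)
The plan is to bound the sample variance of $\{\tilde\rho_{m,j,k}\}_{k=0,\dots,M/2-1}$ by relating it to the variances of the real and imaginary parts of the rescaled data appearing in Assumption \ref{ass:hVar}. First I would write $\tilde\rho_{m,j,k}$ more explicitly. From Eq.~(\ref{eq:SNRdisc}),
\[
\tilde\rho_{m,j,k}=a_{m,k}\cos\!\left(\tfrac{2\pi jk}{M}\right)-b_{m,k}\sin\!\left(\tfrac{2\pi jk}{M}\right),
\]
where $a_{m,k}=\Re\big(2\tilde Q_m^*(f_k)\tilde s(f_k)/(S_{\rm n}(f_k)\Delta t)\big)$ and $b_{m,k}=\Im(\cdots)$. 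Writing $2\tilde Q_m^*(f_k)\tilde s(f_k)/(S_{\rm n}(f_k)\Delta t)$ as the product of $\tilde Q_m^*(f_k)/\sqrt{S_{\rm n}(f_k)\Delta t}$ and $2\tilde s(f_k)/\sqrt{S_{\rm n}(f_k)\Delta t}=2\tilde s(f_k)\sqrt{M}/\sqrt{S_{\rm n}(f_k)T}$, one sees that $|a_{m,k}|,|b_{m,k}|\le \big(\,|\tilde Q_m(f_k)|/\sqrt{S_{\rm n}(f_k)\Delta t}\,\big)\cdot 2\sqrt{M}\,\max(|\Re\tilde s(f_k)|,|\Im\tilde s(f_k)|)/\sqrt{S_{\rm n}(f_k)T}\le \gamma\sqrt{M}\cdot\big(|u_k|+|v_k|\big)$, where $u_k:=2\Re\tilde s(f_k)/\sqrt{S_{\rm n}(f_k)T}$ and $v_k:=2\Im\tilde s(f_k)/\sqrt{S_{\rm n}(f_k)T}$ are exactly the quantities controlled in Assumption \ref{ass:hVar}.

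Next I would use the elementary fact ${\rm Var}(\mathcal{Z})\le{\rm Mean}(\{z_k^2\})=\frac{1}{n}\sum z_k^2$ for any finite real set $\mathcal{Z}=\{z_k\}$, so that it suffices to bound $\frac{1}{M/2-1}\sum_{k}\tilde\rho_{m,j,k}^2$ (the $k=0$ term vanishes). Using $|\tilde\rho_{m,j,k}|\le |a_{m,k}|+|b_{m,k}|$ and the bound above, $\tilde\rho_{m,j,k}^2\le \big(|a_{m,k}|+|b_{m,k}|\big)^2\le 4\gamma^2 M (|u_k|+|v_k|)^2\le 8\gamma^2 M(u_k^2+v_k^2)$. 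Averaging over $k$ gives ${\rm Var}(\{\tilde\rho_{m,j,k}\})\le 8\gamma^2 M\big({\rm Mean}(\{u_k^2\})+{\rm Mean}(\{v_k^2\})\big)$. Finally, ${\rm Mean}(\{u_k^2\})={\rm Var}(\{u_k\})+{\rm Mean}(\{u_k\})^2\le 4+1=5$ by Assumption \ref{ass:hVar}, and likewise for $v$; hence ${\rm Var}(\{\tilde\rho_{m,j,k}\})\le 8\gamma^2 M\cdot 10=80M\gamma^2$, which is Eq.~(\ref{eq:varrho}).

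The only mildly delicate points are bookkeeping ones rather than genuine obstacles: being careful that the average in the definition of ${\rm Var}$ is over $M/2-1$ indices $k=1,\dots,M/2-1$ (with the $k=0$ summand set to zero) while the variances in Assumption \ref{ass:hVar} use the same index range, so the normalizations match; and tracking the numerical constants so the crude inequalities $(x+y)^2\le 2(x^2+y^2)$ and ${\rm Var}\le{\rm Mean}$ of squares together with the constants $4$ and $1$ from Assumption \ref{ass:hVar} combine to exactly $80$ (or less, in which case one simply states $\le 80M\gamma^2$). I expect no real difficulty; the proof is a chain of triangle inequalities, the pointwise bound $|\tilde Q_m(f_k)|/\sqrt{S_{\rm n}(f_k)\Delta t}\le\gamma$ from Eq.~(\ref{eq:gamma_def}), and the variance-mean-square inequality, with the factor $\sqrt{M}$ arising solely from the conversion $\Delta t = T/M$ between the two normalizations of $\tilde s$.
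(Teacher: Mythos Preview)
Your argument is correct and uses the same ingredients as the paper's proof: the pointwise bound coming from the definition of $\gamma$, the inequality ${\rm Var}\le{\rm Mean}$ of squares, and Assumption~\ref{ass:hVar} giving ${\rm Mean}(\{u_k^2\}),{\rm Mean}(\{v_k^2\})\le 5$. The only organizational difference is that the paper splits $\tilde\rho_{m,j,k}$ into four summands $\tilde\rho^{(1)}+\cdots+\tilde\rho^{(4)}$, bounds each of their variances by $5M\gamma^2$, and then combines them via Cauchy--Schwarz on the covariances to reach $16\times 5M\gamma^2=80M\gamma^2$, whereas you bound $|\tilde\rho_{m,j,k}|$ in one stroke; both land on exactly the same constant.

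Two small slips that do not affect the conclusion. First, your intermediate inequality $|a_{m,k}|\le \big(|\tilde Q_m(f_k)|/\sqrt{S_{\rm n}(f_k)\Delta t}\big)\cdot\sqrt{M}\max(|u_k|,|v_k|)$ is not true in general (take $\Re$ and $\Im$ of $\tilde Q_m^*/\sqrt{S_{\rm n}\Delta t}$ equal and $u_k=-v_k$); however your \emph{final} bound $|a_{m,k}|\le\gamma\sqrt{M}(|u_k|+|v_k|)$ is correct, since writing $\tilde Q_m^*(f_k)/\sqrt{S_{\rm n}(f_k)\Delta t}=p_k+iq_k$ with $|p_k|,|q_k|\le\gamma$ gives $|a_{m,k}|=\sqrt{M}\,|p_ku_k-q_kv_k|\le\sqrt{M}\gamma(|u_k|+|v_k|)$. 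Second, the variance in the lemma is over the $M/2$ indices $k=0,\dots,M/2-1$, not $M/2-1$; this only helps you, since dividing by $M/2$ rather than $M/2-1$ makes the mean of squares smaller.
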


\begin{proof}
    We can write $\tilde{\rho}_{m,j,k}=\tilde{\rho}^{(1)}_{m,j,k}+\tilde{\rho}^{(2)}_{m,j,k}+\tilde{\rho}^{(3)}_{m,j,k}+\tilde{\rho}^{(4)}_{m,j,k}$ with
    \begin{eqnarray}
        \tilde{\rho}^{(1)}_{m,j,k}&=&\sqrt{M}\Re\left(\frac{\tilde{Q}^*_m(f_k)}{\sqrt{S_{\rm n}(f_k)\Delta t}}\right)\frac{2\Re(\tilde{s}(f_k))}{\sqrt{S_{\rm n}(f_k)T}}\cos\left(\frac{2\pi jk}{M}\right), \nonumber \\
        \tilde{\rho}^{(2)}_{m,j,k}&=&-\sqrt{M}\Im\left(\frac{\tilde{Q}^*_m(f_k)}{\sqrt{S_{\rm n}(f_k)\Delta t}}\right)\frac{2\Im(\tilde{s}(f_k))}{\sqrt{S_{\rm n}(f_k)T}}\cos\left(\frac{2\pi jk}{M}\right), \nonumber \\
        \tilde{\rho}^{(3)}_{m,j,k}&=&-\sqrt{M}\Re \left(\frac{\tilde{Q}^*_m(f_k)}{\sqrt{S_{\rm n}(f_k)\Delta t}}\right)\frac{2\Im(\tilde{s}(f_k))}{\sqrt{S_{\rm n}(f_k)T}}\sin\left(\frac{2\pi jk}{M}\right), \nonumber \\
        \tilde{\rho}^{(4)}_{m,j,k}&=&-\sqrt{M}\Im \left(\frac{\tilde{Q}^*_m(f_k)}{\sqrt{S_{\rm n}(f_k)\Delta t}}\right)\frac{2\Re(\tilde{s}(f_k))}{\sqrt{S_{\rm n}(f_k)T}}\sin\left(\frac{2\pi jk}{M}\right).
    \end{eqnarray}
    We see that
    \begin{eqnarray}
        {\rm Var}\left(\left\{\tilde{\rho}^{(1)}_{m,j,k}\right\}_{k=0,...,M/2-1}\right)
        &\le&{\rm Mean}\left(\left\{\left(\tilde{\rho}^{(1)}_{m,j,k}\right)^2\right\}_{k=0,...,M/2-1}\right) \nonumber \\
        &\le&M\gamma^2{\rm Mean}\left(\left\{\left(\frac{2\Re(\tilde{s}(f_k))}{\sqrt{S_{\rm n}(f_k)T}}\right)^2\right\}_{k=1,...,M/2-1}\right) \nonumber \\
        &=&M\gamma^2\left[{\rm Var}\left(\left\{\frac{2\Re(\tilde{s}(f_k))}{\sqrt{S_{\rm n}(f_k)T}}\right\}_{k=1,...,M/2-1}\right)+\left({\rm Mean}\left(\left\{\frac{2\Re(\tilde{s}(f_k))}{\sqrt{S_{\rm n}(f_k)T}}\right\}_{k=1,...,M/2-1}\right)\right)^2\right] \nonumber \\
        &\le& 5M\gamma^2,
    \end{eqnarray}
    where the second inequality follows from
    \begin{equation}
        \left(\tilde{\rho}^{(1)}_{m,j,k}\right)^2= M \left(\Re\left(\frac{\tilde{Q}^*_m(f_k)}{\sqrt{S_{\rm n}(f_k)\Delta t}}\right)\right)^2\cos^2\left(\frac{2\pi jk}{M}\right)\left(\frac{2\Re(\tilde{s}(f_k))}{\sqrt{S_{\rm n}(f_k)T}}\right)^2\le M\gamma^2\left(\frac{2\Re(\tilde{s}(f_k))}{\sqrt{S_{\rm n}(f_k)T}}\right)^2 \,,
    \end{equation}
    for $k\in\left[\frac{M}{2}-1\right]$ and $\tilde{\rho}^{(1)}_{m,j,0}=0$, and the last inequality follows from Assumption \ref{ass:hVar}.
    Similarly, we have ${\rm Var}\left(\left\{\tilde{\rho}^{(2)}_{m,j,k}\right\}_{k=0,...,M/2-1}\right)\le5M\gamma^2$, ${\rm Var}\left(\left\{\tilde{\rho}^{(3)}_{m,j,k}\right\}_{k=0,...,M/2-1}\right)\le5M\gamma^2$ and ${\rm Var}\left(\left\{\tilde{\rho}^{(4)}_{m,j,k}\right\}_{k=0,...,M/2-1}\right)\le5M\gamma^2$.
    Combining these with
    \begin{eqnarray}
        &&{\rm Var}\left(\{\tilde{\rho}_{m,j,k}\}_{k=0,...,M/2-1}\right) \nonumber \\
        &=&\sum_{a=1}^4 {\rm Var}\left(\{\tilde{\rho}^{(a)}_{m,j,k}\}_{k=0,...,M/2-1}\right) + \sum_{\substack{a,b=1,...,4 \\ a\ne b}} {\rm Cov}\left(\{\tilde{\rho}^{(a)}_{m,j,k}\}_{k=0,...,M/2-1},\{\tilde{\rho}^{(b)}_{m,j,k}\}_{k=0,...,M/2-1}\right) \nonumber \\
        &\le&\sum_{a=1}^4 {\rm Var}\left(\{\tilde{\rho}^{(a)}_{m,j,k}\}_{k=0,...,M/2-1}\right) + \sum_{\substack{a,b=1,...,4 \\ a\ne b}} \sqrt{{\rm Var}\left(\{\tilde{\rho}^{(a)}_{m,j,k}\}_{k=0,...,M/2-1}\right){\rm Var}\left(\{\tilde{\rho}^{(b)}_{m,j,k}\}_{k=0,...,M/2-1}\right)} \,,
    \end{eqnarray}
    we obtain Eq.~(\ref{eq:varrho}).
\end{proof}

\subsubsection{Main result}

Then, the following is our main result, a new quantum algorithm for GW matched filtering and a theorem on its query complexity and the number of qubits used.

\begin{theorem}
    Under Assumptions \ref{ass:oracle} and \ref{ass:hVar}, consider Problem \ref{prob:GWMF}.
	Let $\rho_{\rm soft}$ and $\rho_{\rm hard}$ be real numbers such that $0<\frac{\rho_{\rm hard}-\rho_{\rm soft}}{8\sqrt{5M}\gamma}<4$, and $\delta$ be a real number in $(0,1)$.
	Then, there is a quantum algorithm that uses
	\begin{equation}
	    O\left(\left(\log M+\log\left(\frac{\sqrt{M}\gamma}{\rho_{\rm hard}-\rho_{\rm soft}}\right)\right)\log\left(\frac{\sqrt{M}\gamma}{\rho_{\rm hard}-\rho_{\rm soft}}\right)\log\log\left(\frac{\sqrt{M}\gamma}{\rho_{\rm hard}-\rho_{\rm soft}}\right)\log\left(\frac{N_{\rm temp}M}{\delta}\right)\right) \label{eq:qnum}
	\end{equation}
	qubits and behaves as follows:
	\begin{itemize}
	    \item The algorithm outputs either of
	    \begin{enumerate}
	    \renewcommand{\labelenumi}{(\Alph{enumi})}
	        \item a message ``there is a signal" and an integer pair $(m,j)\in[N_{\rm temp}]_0\times[M]_0$ such that $\rho_{m,j}\ge\rho_{\rm soft}$,
	        \item a message ``there is no signal".
	    \end{enumerate}

	    \item If $r(\rho_{\rm hard})>0$, the algorithm outputs (A) with probability at least $1-\delta$.
	    In the algorithm, the number of queries to $O_{\rm Re}$ and $O_{\rm Im}$ is of order
	    \begin{equation}
	    O\left(\frac{\gamma\sqrt{M}}{(\rho_{\rm hard}-\rho_{\rm soft})\sqrt{\tilde{r}(\rho_{\rm hard})}}\log^{3/2}\left(\frac{\gamma\sqrt{M}}{\rho_{\rm hard}-\rho_{\rm soft}}\right)\log\log\left(\frac{\gamma\sqrt{M}}{\rho_{\rm hard}-\rho_{\rm soft}}\right)\log\left(\frac{N_{\rm temp}M}{\delta}\right)\log \delta^{-1}\right), \label{eq:compSigCase1}
	    \end{equation}
	    and thus
	    \begin{equation}
		O\left(\frac{\gamma M}{(\rho_{\rm hard}-\rho_{\rm soft})\sqrt{r(\rho_{\rm hard})}}\log^{3/2}\left(\frac{\gamma \sqrt{M}}{\rho_{\rm hard}-\rho_{\rm soft}}\right)\log\log\left(\frac{\gamma\sqrt{M}}{\rho_{\rm hard}-\rho_{\rm soft}}\right)\log\left(\frac{N_{\rm temp}M}{\delta}\right)\log \delta^{-1}\right). \label{eq:compSigCase2}
	    \end{equation}
	    
    \item If $r(\rho_{\rm soft})=0$, the algorithm outputs (B) with certainty.
    In the algorithm, the number of queries to $O_{\rm Re}$ and $O_{\rm Im}$ is of order
    \begin{equation}
	O\left(\frac{\gamma M\sqrt{N_{\rm temp}}}{\rho_{\rm hard}-\rho_{\rm soft}}\log^{3/2}\left(\frac{\gamma \sqrt{M}}{\rho_{\rm hard}-\rho_{\rm soft}}\right)\log\log\left(\frac{\gamma\sqrt{M}}{\rho_{\rm hard}-\rho_{\rm soft}}\right)\log\left(\frac{N_{\rm temp}M}{\delta}\right)\log \delta^{-1}\right). \label{eq:compNoSigCase}
    \end{equation}
	    
    \item If $r(\rho_{\rm hard})=0$ and $r(\rho_{\rm soft})>0$, the algorithm outputs either (A) or (B).
    In the algorithm, the number of queries to $O_{\rm Re}$ and $O_{\rm Im}$ is of order as in Eq.~(\ref{eq:compNoSigCase}).
	\end{itemize}
	\label{th:main}
\end{theorem}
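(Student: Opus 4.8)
The plan is to assemble the oracle on which QAA operates in two layers --- a summand oracle feeding a QMCI black box --- and then to read off the amplitude bounds that drive the three cases. First I would build, from $O_{\rm Re}$ and $O_{\rm Im}$ together with the arithmetic oracles, a unitary $O_\rho$ with $O_\rho\ket{m}\ket{j}\ket{k}\ket{0}=\ket{m}\ket{j}\ket{k}\ket{\tilde\rho_{m,j,k}}$: multiply $j$ by $k$, rescale by $2\pi/M$, apply the sine and cosine circuits, and combine with the outputs of $O_{\rm Re}$ and $O_{\rm Im}$ by two multiplications and a subtraction following Eq.~(\ref{eq:SNRdisc}), uncomputing the scratch registers; the $k=0$ branch comes out consistently since $O_{\rm Re},O_{\rm Im}$ return $0$ there. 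With $(m,j)$ held as a control, $O_\rho$ is exactly the oracle $O_{\mathcal X}$ of Theorem \ref{th:QMCI} for $\mathcal X=\{\tilde\rho_{m,j,k}\}_{k\in[M/2]_0}$ (so $N=M/2$), whose mean is $\rho_{m,j}$ and whose variance is at most $80M\gamma^2=:\sigma^2$ by Lemma \ref{lem:varrho}. I would then invoke Theorem \ref{th:QMCI} with accuracy $\epsilon:=(\rho_{\rm hard}-\rho_{\rm soft})/2$ --- the hypothesis $0<\frac{\rho_{\rm hard}-\rho_{\rm soft}}{8\sqrt{5M}\gamma}<4$ is precisely $\epsilon\in(0,4\sigma)$ --- and failure probability $\delta':=\Theta(\delta/(N_{\rm temp}M))$, giving a controlled unitary $\ket{m}\ket{j}\ket{0}\mapsto\ket{m}\ket{j}\sum_y\alpha_{y,m,j}\ket{y}$ whose output is $\epsilon$-close to $\rho_{m,j}$ with squared amplitude at least $1-\delta'$. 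Composing it with a comparer flagging $y\ge(\rho_{\rm hard}+\rho_{\rm soft})/2$ and prepending $O^{\rm EqPr}_{N_{\rm temp}}\otimes O^{\rm EqPr}_M$ yields an oracle $A$ of the form of Eq.~(\ref{eq:A}) with $a=\frac{1}{N_{\rm temp}M}\sum_{m,j}p_{m,j}$, where $p_{m,j}$ is the probability that the QMCI output for $(m,j)$ exceeds $(\rho_{\rm hard}+\rho_{\rm soft})/2$.

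Second, I would extract the two key estimates. An $\epsilon$-approximation of any $\rho_{m,j}\ge\rho_{\rm hard}$ is at least $(\rho_{\rm hard}+\rho_{\rm soft})/2$, so those pairs have $p_{m,j}\ge1-\delta'$ and hence $a\ge(1-\delta')\tilde r(\rho_{\rm hard})$; an $\epsilon$-approximation of any $\rho_{m,j}<\rho_{\rm soft}$ is below $(\rho_{\rm hard}+\rho_{\rm soft})/2$, so those pairs have $p_{m,j}\le\delta'$, and therefore inside the flagged branch of $A$ the total probability weight on pairs with $\rho_{m,j}<\rho_{\rm soft}$ is at most $\delta'/a$. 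The algorithm runs $\proc{QAA}(A,\gamma_{\rm QAA},\delta_{\rm QAA})$ with $\gamma_{\rm QAA}:=(1-\delta')/(N_{\rm temp}M)$ and $\delta_{\rm QAA}:=\Theta(\delta)$; on ``success'' it measures the $(m,j)$ register, evaluates $\rho_{m,j}$ classically (an $O(M)$ sum over the given data), and outputs (A) with that pair if $\rho_{m,j}\ge\rho_{\rm soft}$, otherwise (B); on ``failure'' it outputs (B). This classical check makes every (A)-output carry a pair with $\rho_{m,j}\ge\rho_{\rm soft}$, as the statement demands.

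Third comes the case analysis via Theorem \ref{th:QAA}. If $r(\rho_{\rm hard})>0$ there is a pair with $\rho_{m,j}=\rho_m\ge\rho_{\rm hard}$, so $a\ge\gamma_{\rm QAA}$ and QAA returns ``success'' with probability at least $1-\delta_{\rm QAA}$ using $O(\log\delta_{\rm QAA}^{-1}/\sqrt{a})$ queries to $A$; conditioned on success the sampled pair has $\rho_{m,j}\ge\rho_{\rm soft}$ with probability at least $1-\delta'/a=1-\Theta(\delta)$, and fixing the constants in $\delta'$ and $\delta_{\rm QAA}$ makes the overall probability of output (A) at least $1-\delta$. If $r(\rho_{\rm soft})=0$ every $\rho_{m,j}$ is below $\rho_{\rm soft}$, so the classical check fails for whatever pair is measured and (B) is output with certainty; the remaining case permits either output by default. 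For the counts, each $A$ uses $O\left(\frac{\sigma}{\epsilon}\log^{3/2}(\sigma/\epsilon)\log\log(\sigma/\epsilon)\log(1/\delta')\right)$ queries to $O_{\rm Re},O_{\rm Im}$ by Eq.~(\ref{eq:compQMCI}); QAA makes $O(\log\delta_{\rm QAA}^{-1}/\sqrt{a})$ calls to $A$ when $a\ge\gamma_{\rm QAA}$ and $O(\log\delta_{\rm QAA}^{-1}/\sqrt{\gamma_{\rm QAA}})$ otherwise; substituting $\sigma=\Theta(\sqrt{M}\gamma)$, $\epsilon=(\rho_{\rm hard}-\rho_{\rm soft})/2$, $a\ge(1-\delta')\tilde r(\rho_{\rm hard})$, $\tilde r(\rho_{\rm hard})\ge r(\rho_{\rm hard})/M$, $\gamma_{\rm QAA}=\Theta(1/(N_{\rm temp}M))$, $\delta_{\rm QAA}=\Theta(\delta)$ and $\delta'=\Theta(\delta/(N_{\rm temp}M))$ reproduces Eqs.~(\ref{eq:compSigCase1})--(\ref{eq:compNoSigCase}). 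The qubit total is the sum of the $O(\log N_{\rm temp}+\log M)$ qubits of the $(m,j)$ registers, the QMCI ancilla count of Eq.~(\ref{eq:qubitQMCI}) with $N=M/2$ and $\delta'=\Theta(\delta/(N_{\rm temp}M))$, and the $O(\cdot)$ overhead of QAA, which gives Eq.~(\ref{eq:qnum}).

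I expect the main obstacle to be controlling the non-$\epsilon$-accurate tail of QMCI within the marked subspace of QAA. The marked state is not supported purely on pairs with $\rho_{m,j}\ge\rho_{\rm soft}$, and since $a$ can be as small as $\Theta(1/(N_{\rm temp}M))$ the tail weight $\delta'/a$ is tamed only once $\delta'$ is forced down to $O(\delta/(N_{\rm temp}M))$ --- which is exactly what generates the $\log(N_{\rm temp}M/\delta)$ factors in Eqs.~(\ref{eq:compSigCase1})--(\ref{eq:qnum}). A secondary subtlety is that QMCI must be invoked as a coherent controlled subroutine inside QAA (hence the QAE-based, QFT-using variant of Theorem \ref{th:QAE}, not a measurement-based one), and that the classical post-check must simultaneously certify validity of every (A)-output and preserve the ``with certainty'' guarantee of (B) when $r(\rho_{\rm soft})=0$.
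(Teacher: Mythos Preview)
Your proposal is correct and follows essentially the same route as the paper: build $O_\rho$ from $O_{\rm Re},O_{\rm Im}$ and arithmetic, feed it to the coherent QMCI of Theorem \ref{th:QMCI} with $\epsilon=(\rho_{\rm hard}-\rho_{\rm soft})/2$, $\sigma=4\sqrt{5M}\gamma$ and $\delta'=\Theta(\delta/(N_{\rm temp}M))$, append a comparer against $\rho_{\rm mid}$, run \proc{QAA} with threshold $\Theta(1/(N_{\rm temp}M))$ and confidence $\Theta(\delta)$, and finish with a classical recomputation of $\rho_{m,j}$. The paper's proof (Algorithm \ref{alg:main} and Appendix B) makes exactly these choices with explicit constants ($\delta'=\delta/(4N_{\rm temp}M)$, QAA threshold $1/(2N_{\rm temp}M)$, QAA confidence $\delta/2$), and its case analysis --- bounding $p_1\ge\tilde r(\rho_{\rm hard})/2$ and $p_{\ge\rho_{\rm soft}}\ge1-2N_{\rm temp}M\delta'$ --- is the same tail-control argument you outline as the main obstacle.
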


\begin{proof}
    We first present the algorithm, and then prove the statements on the query complexity and the qubit number.\\
    
    \noindent \textbf{Algorithm}
    
    The algorithm is shown in Algorithm \ref{alg:main}.
    Note that, because of Lemma \ref{lem:varrho}, we can set $\sigma$, the upper bound on the variance of $\{\tilde{\rho}_{m,j,k}\}_{k=0,...,M/2-1}$, as in line 1.
    
    \begin{algorithm}[htp]
	\caption{Proposed algorithm for GW matched filtering}
	\label{alg:main}
	\begin{algorithmic}[1]

		\STATE Set $\epsilon=\frac{\rho_{\rm hard}-\rho_{\rm soft}}{2}$, $\sigma=4\sqrt{5M}\gamma$ and $\delta^\prime=\frac{\delta}{4N_{\rm temp}M}$.
		
		\STATE Combining $O_{\rm Re}$, $O_{\rm Im}$ and some arithmetic oracles, construct an oracles $O_{\rho}$ such that, for every $(m,j,k)\in[N_{\rm temp}]_0\times\left[M\right]_0\times\left[\frac{M}{2}\right]_0$,
		\begin{equation}
		O_{\rho}\ket{m}\ket{j}\ket{k}\ket{0}=\ket{m}\ket{j}\ket{k}\ket{\tilde{\rho}_{m,j,k}}.
		\end{equation}
		
		\STATE On the basis of Theorem \ref{th:QMCI}, using $O_{\rho}$, construct an oracle $O_{\rho,\epsilon,\delta^\prime,\sigma}^{\rm mean}$ such that, for every $(m,j)\in[N_{\rm temp}]_0\times\left[M\right]_0$,
		\begin{equation}
		    O_{\rho,\epsilon,\delta^\prime,\sigma}^{\rm mean}\ket{m}\ket{j}\ket{0}=\ket{m}\ket{j}\sum_{y\in \mathcal{Y}_{m,j}} \alpha_{\rho,y}\ket{y},
		\end{equation}
        where $\mathcal{Y}_{m,j}$ is a finite set of real numbers that includes a subset $\tilde{\mathcal{Y}}_{m,j}$ consisting of $\epsilon$-approximations of $\rho_{m,j}$ and $\{\alpha_y\}_{y\in\mathcal{Y}_{m,j}}$ are complex numbers satisfying $\sum_{\tilde{y}\in\tilde{\mathcal{Y}}_{m,j}}|\alpha_{\rho,\tilde{y}}|^2\ge 1-\delta$.
        
        \STATE Construct an oracle $O_{\rm AE}$ that performs the following operation
        \begin{eqnarray}
            \ket{0}\ket{0}\ket{0}\ket{0}
            &\rightarrow& \sqrt{\frac{1}{N_{\rm temp}M}} \sum_{m=0}^{N_{\rm temp}-1}\sum_{j=0}^{M-1} \ket{m}\ket{j}\ket{0}\ket{0} \nonumber \\
            &\rightarrow&
            \sqrt{\frac{1}{N_{\rm temp}M}} \sum_{m=0}^{N_{\rm temp}-1}\sum_{j=0}^{M-1}\ket{m}\ket{j}\sum_{y\in \mathcal{Y}_{m,j}}\alpha_{\rho,y}\ket{y}\ket{0}\nonumber \\
            &\rightarrow&\sqrt{\frac{1}{N_{\rm temp}M}} \sum_{m=0}^{N_{\rm temp}-1}\sum_{j=0}^{M-1}\ket{m}\ket{j}\sum_{y\in \mathcal{Y}_{m,j}}\alpha_{\rho,y}\ket{y}\left(\mathbbm{1}_{y\ge\rho_{\rm mid}}\ket{1}+\mathbbm{1}_{y<\rho_{\rm mid}}\ket{0}\right), \label{eq:OAE}
        \end{eqnarray}
        where $\rho_{\rm mid}:=\frac{\rho_{\rm hard}+\rho_{\rm soft}}{2}$.
        In Eq.~(\ref{eq:OAE}), we use $O^{\rm EqPr}_{N_{\rm temp}}$ and $O^{\rm EqPr}_{M}$ at the first arrow, $O_{\rho,\epsilon,\delta^\prime,\sigma}^{\rm mean}$ at the second arrow, and a comparer with $\ket{\rho_{\rm mid}}$ on an undisplayed ancillary register at the last arrow.
        
        \STATE Run $\proc{QAA}\left(O_{\rm AE},\frac{1}{2N_{\rm temp}M},\frac{\delta}{2}\right)$.
        
        \IF{we get the message ``failure"}
            \STATE Output the message "there is no signal".
        \ELSE 
            \STATE Measure the first two registers in the quantum state output by QAA and let the outcome be $(m,j)$.
            \STATE Calculate $\rho_{m,j}$ classically.
            \IF{$\rho_{m,j}\ge\rho_{\rm soft}$}
                \STATE Output the message "there is a signal" and $(m,j)$.
            \ELSE
                \STATE Output the message "there is no signal".
            \ENDIF
        \ENDIF

	\end{algorithmic}
    \end{algorithm}

    \ \\
    
    \noindent \textbf{Query complexity and qubit number}
    
    We describe this part in \ref{sec:remMainProof} since it is rather technical.
    
\end{proof}

\subsection{Remarks on settings and assumptions}

Now, we discuss the validity on the settings and assumptions in the algorithm proposed above.

\subsubsection{Setting of two thresholds \label{sec:twoThres}}

In Algorithm \ref{alg:main}, we set the two SNR thresholds, $\rho_{\rm hard}$ and $\rho_{\rm soft}$, whose meanings are explained in Section \ref{sec:idea}.
We consider the following is a plausible way to set them.
First, let us denote a common value of the SNR threshold as $\rho_{\rm com}$, e.g. $\rho_{\rm com}=8$
\footnote{In the absence of the astrophysical signal, SNR follows the Rayleigh distribution. The false alarm probability of each template is given by $p_\mathrm{fa,temp} = \exp[-\rho_\mathrm{com}^2/2]$.
The resolution of the coalescence time $\Delta t_\mathrm{start}$ is determined by the mismatch between two waveforms having slightly different coalescence times. If we set the mismatch is 5\% and the waveforms is monochromatic with the frequency of 100Hz, we get $\Delta t_\mathrm{start} \sim O(10^{-3})$ sec.
Then, for the observation period of $T_\mathrm{obs}$, the expected number of false alarm events is estimated by $N_\mathrm{fa} \sim p_\mathrm{fa,temp} N_\mathrm{temp} T_\mathrm{obs} / \Delta t_\mathrm{start}$. Assuming $T_\mathrm{obs}=10^7 \mathrm{sec}$, $\Delta t_\mathrm{start} = 10^{-3} \mathrm{sec}$, and $N_\mathrm{temp} = 10^6$, we should set the SNR threshold to $\rho_\mathrm{comp} \sim \sqrt{2 \ln (10^{10} \times N_\mathrm{temp})} \sim 8.6$ if we suppress $N_\mathrm{fa}$ to $O(1)$. See Ref.~\cite{Cutler:1992tc} and Chap.7 of Ref.~\cite{Maggiore:2007ulw}}
\cite{abbott2016}.
Note that the calculated SNR, in general, has fluctuations due to the random detector noise, and its variance is $1$ under the current normalization of the templates. Thus, even without the QMCI error, events with $\rho\in[\rho_{\rm com}, \rho_{\rm com} + 1)$ could have a true SNR value smaller than the threshold, while events with $\rho > \rho_{\rm com} + 1$ are very likely to exceed the threshold.
In light of this, it is reasonable to set $\rho_{\rm soft}=\rho_{\rm com}$ and $\rho_{\rm hard}=\rho_{\rm com}+1$, that is, $\rho_{\rm soft}=8$ and $\rho_{\rm hard}=9$ for $\rho_{\rm com}=8$.
In this setting, Algorithm \ref{alg:main} detects events with $\rho\ge9$ with high probability 
and never falsely alarms us for events with $\rho<8$, and events with $\rho\in[8,9)$ are detected or missed depending on fluctuations by the detector noise and the QMCI error.

\subsubsection{Implementation of $O_{\rm Re}$ and $O_{\rm Im}$ \label{sec:oracle}}

Here, we discuss the validity of Assumption \ref{ass:oracle}, that is, implementability of $O_{\rm Re}$ and $O_{\rm Im}$.
If we have accesses to the following oracles $O_{\rm hSRe}$, $O_{\rm hSIm}$, $O_{\rm QRe}$ and $O_{\rm QIm}$ such that, for every $(m,k)\in[N_{\rm temp}]_0\times\left[\frac{M}{2}-1\right]$,
\begin{equation}
    O_{\rm hSRe}\ket{k}\ket{0}=\ket{k}\Ket{\Re\left(\frac{\tilde{h}(f_k)}{S_{\rm n}(f_k)}\right)},O_{\rm hSIm}\ket{k}\ket{0}=\ket{k}\Ket{\Im\left(\frac{\tilde{h}(f_k)}{S_{\rm n}(f_k)}\right)} 
\end{equation}
and
\begin{equation}
    O_{\rm QRe}\ket{m}\ket{k}\ket{0}=\ket{m}\ket{k}\Ket{\Re\tilde{Q}_m(f_k)},O_{\rm QIm}\ket{m}\ket{k}\ket{0}=\ket{m}\ket{k}\Ket{\Im\tilde{Q}_m(f_k)},
\end{equation}
we can combine these along with arithmetic oracles to construct $O_{\rm Re}$ and $O_{\rm Im}$ (note that the remaining factor $2/\Delta t$ is just a known real number independent of $m$ and $k$).

$O_{\rm QRe}$ and $O_{\rm QIm}$ are in fact implementable.
To see this, note that $\tilde{Q}_m(f_k)$ is given as an explicit function of intrinsic parameters and $f_k=k/T$ by theories of GW sources such as compact binary
coalescences~\cite{balasubramanian1996,owen1996,owen1999,allen2012}.
Therefore, if we can relate the index $m$ to intrinsic parameter values by some elementary function, which is in fact possible under simple lattice-like template spacing such as \cite{owen1996}, we can write $\tilde{Q}_m$ as an explicit function of $m$ and $k$ and thus construct $O_{\rm QRe}$ and $O_{\rm QRe}$ using arithmetic oracles.

On the other hand, $\tilde{h}(f_k)/S_{\rm n}(f_k)$ is a factor determined by the experimental data and not represented by an explicit function.
We therefore resort to QRAM \cite{giovannetti2008}.
This enables us to access $N$ recorded data $x_k$ labeled by $k\in[N]_0$ and load a specified entry onto a register as
\begin{equation}
    \ket{k}\ket{0}\rightarrow\ket{k}\ket{x_k}
\end{equation}
in superposition in $O(\log N)$ time.
It takes $O(M)$ time to register $\left\{\Re\left(\frac{\tilde{h}(f_k)}{S_{\rm n}(f_k)}\right)\right\}_{k=1,...,\frac{M}{2}-1}$ and $\left\{\Im\left(\frac{\tilde{h}(f_k)}{S_{\rm n}(f_k)}\right)\right\}_{k=1,...,\frac{M}{2}-1}$ into a QRAM in advance of running Algorithm \ref{alg:main}, but this is expected to be less time-consuming than Algorithm \ref{alg:main} itself, which has $\widetilde{O}(\sqrt{N_{\rm temp}}M)$ query complexity.

\subsubsection{The mean and variance of detector outputs \label{sec:hvar}}

Here, we see the validity of Assumption \ref{ass:hVar} as follows.
First, from Eq.~(\ref{eq:noisePSdisc}), for $x_k:=2\Re \tilde{n}(f_k)/\sqrt{S_{\rm n}(f_k)T}$ and $y_k:=2\Im \tilde{n}(f_k)/\sqrt{S_{\rm n}(f_k)T}$ with $k\in\left[\frac{M}{2}-1\right]$, we see that $x_1,...,x_{\frac{M}{2}-1},y_1,...,y_{\frac{M}{2}-1}$ are independent standard normal variables.
We assume that $M\gg 1$ and thus the sample means and the sample variances of $\{x_k\}_k$ and $\{y_k\}_k$ are equal to the population means and the population variances, that is, 0 and 1, respectively.
We also assume the usual situation that the signal is much smaller than the noise.
More concretely, we assume that, for every $k\in\left[\frac{M}{2}-1\right]$, $|u_k|\ll 1$ and $|v_k|\ll 1$ hold, where $u_k:=2\Re \tilde{s}(f_k)/\sqrt{S_{\rm n}(f_k)T}$ and $v_k:=2\Im \tilde{s}(f_k)/\sqrt{S_{\rm n}(f_k)T}$, and thus that $|{\rm Mean}(\{u_k\}_{k=1,...,M/2-1})|\le 1$, $|{\rm Mean}(\{v_k\}_{k=1,...,M/2-1})|\le 1$, ${\rm Var}(\{u_k\}_{k=1,...,M/2-1})\le 1$, and ${\rm Var}(\{v_k\}_{k=1,...,M/2-1})\le 1$ hold. 
Under these assumptions, we can obtain
\begin{eqnarray}
    &&\left|{\rm Mean}\left(\left\{\frac{2\Re\tilde{h}(f_k)}{\sqrt{S_{\rm n}(f_k)T}}\right\}_{k=1,...,M/2-1}\right)\right|^2 \le \left(\left|{\rm Mean}(\{x_k\}_{k=1,...,M/2-1})\right|+\left|{\rm Mean}(\{u_k\}_{k=1,...,M/2-1})\right|\right)^2 
    \le 1 ,, \nonumber \\
    &&\left|{\rm Mean}\left(\left\{\frac{2\Im\tilde{h}(f_k)}{\sqrt{S_{\rm n}(f_k)T}}\right\}_{k=1,...,M/2-1}\right)\right|^2 \le \left(\left|{\rm Mean}(\{y_k\}_{k=1,...,M/2-1})\right|+\left|{\rm Mean}(\{v_k\}_{k=1,...,M/2-1})\right|\right)^2 
    \le 1 \,, \nonumber \\
    &&{\rm Var}\left(\left\{\frac{2\Re\tilde{h}(f_k)}{\sqrt{S_{\rm n}(f_k)T}}\right\}_{k=1,...,M/2-1}\right) \le \left(\sqrt{{\rm Var}(\{x_k\}_{k=1,...,M/2-1})}+\sqrt{{\rm Var}(\{u_k\}_{k=1,...,M/2-1})}\right)^2 
    \le 4 \,, \nonumber \\
    &&{\rm Var}\left(\left\{\frac{2\Im\tilde{h}(f_k)}{\sqrt{S_{\rm n}(f_k)T}}\right\}_{k=1,...,M/2-1}\right) \le \left(\sqrt{{\rm Var}(\{y_k\}_{k=1,...,M/2-1})}+\sqrt{{\rm Var}(\{v_k\}_{k=1,...,M/2-1})}\right)^2 
    \le 4 \,.
\end{eqnarray}

\subsubsection{Magnitude of $\gamma$ \label{sec:gamma}}

Here we will study the validity of the statement made in Problem~\ref{prob:GWMF}, that the value of $\gamma$, defined in Eq.~\eqref{eq:gamma_def}, is of order $O(1)$. This claim is supported by the fact that the template $Q$ is normalized as $(Q|Q)=1$, that is
\begin{equation}
	\frac{4}{M}\Re\left(\sum^{\frac{M}{2}-1}_{k=1} \frac{|\tilde{Q}_m(f_k)|^2}{S_{\rm n}(f_k) \Delta t}\right) = 1 \,.
	\label{eq:Q_normed_explicit}
\end{equation}

Since $\gamma$ is defined as the square root of the maximum summand in the sum of the left hand side of Eq.~\eqref{eq:Q_normed_explicit}, $\gamma$ can take values in the following range:

\begin{equation}
    \frac{1}{\sqrt{2-\frac{4}{M}}} \leq \gamma \leq \frac{\sqrt{M}}{2} \, ,
\label{eq:gamma_bounds}
\end{equation}

\noindent where the lower bound corresponds to the case in which all summands have the same value and the upper bound corresponds to the case where only one summand contributes. Because GW interferometers and signals will usually be broadband, we will be closer to the limit in which a significant fraction of the summands have similar values and so $\gamma$ will be of order $O(1)$. 

In Fig.~\ref{fig:gamma}, we show an explicit example of this, where we compute the value of $\gamma$ according to Eq.~\eqref{eq:gamma_def} for the compact binary coalescence (CBC) case, modeled using templates $Q$ computed with the $\texttt{IMRPhenomPv2}$ waveform \cite{Khan_2019}. The value of $\gamma$ only depends in the amplitude evolution of the waveform which is mostly depend on the component masses, parametrized via the total mass $M=m_1+m_2$ and the mass ratio $q=m_2/m_1$. We study total masses between $1 M_\odot$ and $300 M_\odot$, mass ratios between 0.2 and 1, and for this example, we set the spins to 0. For the noise PSD, $S_{\rm n}(f)$, we 
use the Advanced LIGO design sensitivity \cite{AdvLIGO_design}. We assume a sampling rate of $2048$Hz, and low and high frequency cutoffs of $20$Hz and $1024$Hz, respectively.

In Fig.~\ref{fig:gamma}, we can observe that $\gamma$ is of order $O(1)$ in all the parameter space studied. The minimum value of $\gamma$ is $2.50$ at $M = 126 M_\odot$, $q = 1.00$, which is the point where the SNR is most homogeneusly spread out across frequencies due to the location of the merger (where $|\tilde{Q}(f)|^2\propto f^{-4/3}$ instead of $|\tilde{Q}(f)|^2\propto f^{-7/3}$ as in the inspiral \cite{IMR_approx}) just before the most sensitive frequency range of the interferometer. On the other hand, the maximum value of $\gamma$ is 
4.66
at $M = 300 M_\odot$, $q = 0.20$. For the very largest masses, we observe that $\gamma$ tends to increase due to the fact that the higher the mass, the more the template is shifted towards smaller and smaller frequencies, until only the frequencies close to the low frequency cutoff contribute. Nonetheless, $\gamma$ still takes $O(1)$ values for all masses that can be expected to be seen by the ground-based detectors.

\begin{figure}[t!]
\begin{center}
\includegraphics[width=0.7\textwidth]{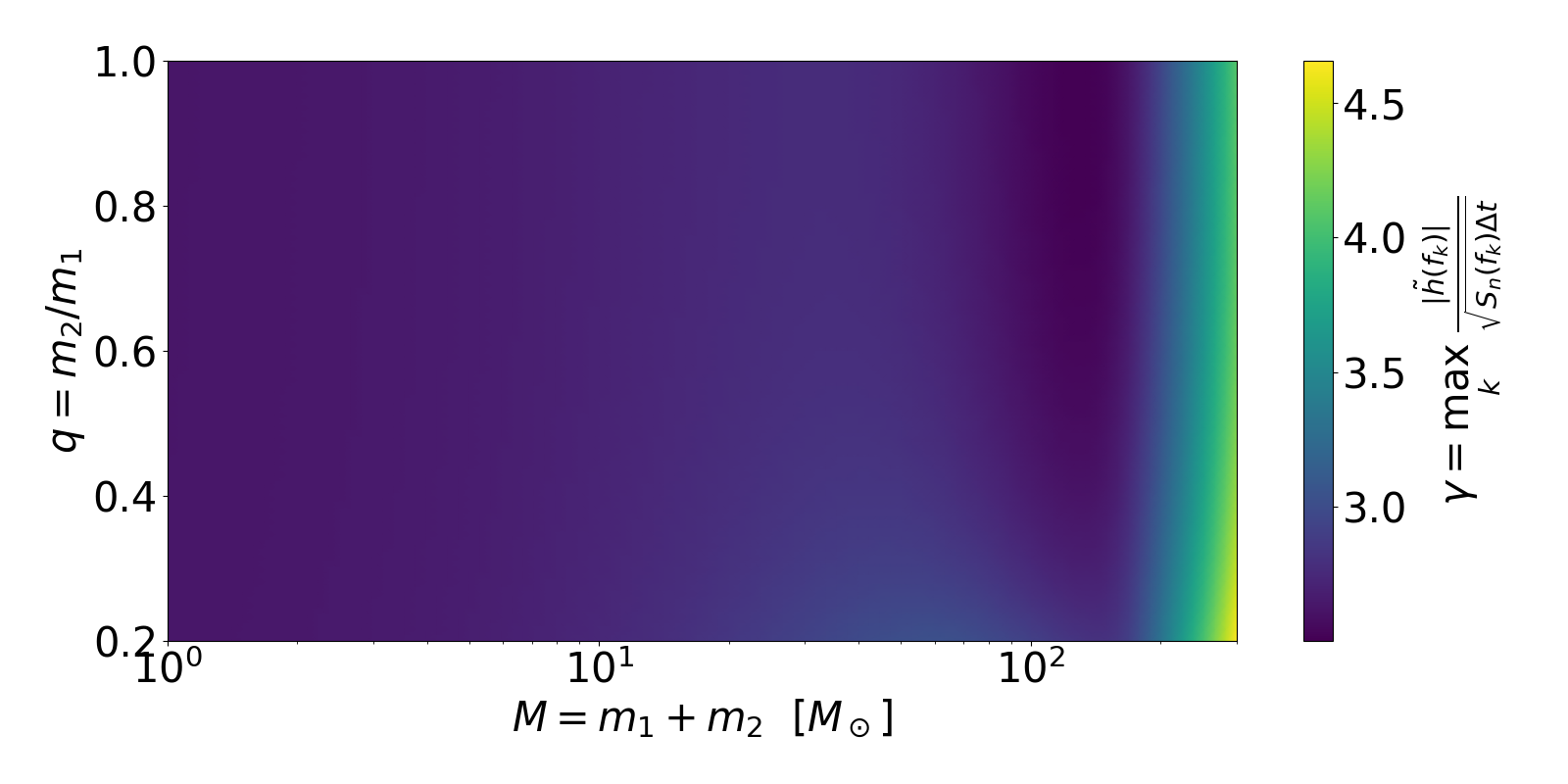}
\end{center} 
\caption{$\gamma$ computed with Eq.~\eqref{eq:gamma_def} using the $\texttt{IMRPhenomPv2}$ waveform \cite{Khan_2019} with spins set to 0 for the template $Q$ and the Advanced LIGO design sensitivity PSD for $S_{\rm n}(f)$ \cite{AdvLIGO_design}. We assume a sampling rate of $2048$Hz, and low and high frequency cutoffs of $20$Hz and $1024$Hz, respectively.}
\label{fig:gamma}
\end{figure}

\section{Summary \label{sec:summary}}

In this paper, we proposed a new quantum algorithm for GW matched filtering based on Ref.~\cite{gao2021} which has investigated the application of Grover’s search algorithm.
Our method, described in details in Sec.~\ref{sec:propAlg}, uses QMCI for the SNR calculation of Eq.~(\ref{eq:SNRdisc}) instead of FFT used in the classical method and in Ref.~\cite{gao2021}, and searches high SNR templates by QAA, running QMCI as a subroutine.
To deal with the erroneous nature of QMCI, we propose to set two thresholds $\rho_{\rm hard}$ and $\rho_{\rm soft}$ such that 
the proposed algorithm returns ``there is a signal" for events with SNR $\rho\ge\rho_{\rm hard}$ with high probability and ``there is no signal" for events with SNR $\rho<\rho_{\rm soft}$ with certainty. 

Our main results are summarized in Eqs.~(\ref{eq:compSigCase1}), (\ref{eq:compSigCase2}) and (\ref{eq:compNoSigCase}). By neglecting the  logarithmic dependencies, we can see that the proposed algorithm has $\widetilde{O}(M/\sqrt{r(\rho_{\rm th})})$ and $\widetilde{O}(M\sqrt{N_{\rm temp}})$ query complexity for $\tilde{r}(\rho_{\rm hard})>0$ and $\tilde{r}(\rho_{\rm hard})=0$, respectively. This still indicates a quadratic speedup with respect to the template number $N_{\rm temp}$ and the same order of complexity with respect to the number of time-series data points $M$ compared to the algorithm of Ref.~\cite{gao2021}, which is summarized in Sec.~\ref{sec:previous}. We note that the choice of $\rho_{\rm hard}$ and $\rho_{\rm soft}$, the accuracy of QMCI in other words, mildly affects the complexity, but according to the discussion in Sec.~\ref{sec:twoThres}, we take $\rho_{\rm hard}-\rho_{\rm soft}=1$, and thus not changing the factor.

The advantage of this algorithm is that it requires only qubit number that logarithmically scales on $M$ as described in Eq.~(\ref{eq:qnum}), contrary to the FFT-based method that requires $O(M)$ qubits.
Therefore, this algorithm is expected to be beneficial in the situation that quantum computers have a limitation on qubit number, which is likely to occur due to the large overhead for quantum error correction.
A possible drawback of the proposed algorithm compared to the algorithm in \cite{gao2021} is that, to load the detector output data onto a register in superposition in QMCI, it uses QRAM, whose experimental realization is challenging \cite{arunachalam2015}. 

In any case, we believe that proposing multiple ways of applying quantum methods that have different pros and cons is highly meaningful in taking advantage of quantum computing in future GW experiments, given today's uncertainty on what the future quantum computers will be.
Discussions for applying quantum computing in experimental physics have just started. We anticipate that more proposals will follow not only for GW data analysis but also for other heavy data analyses in various physical and astronomical experiments.

\section*{Acknowledgements}
G.M., S.K., and S.N. acknowledge support from the research project  PGC2018-094773-B-C32, and the Spanish Research Agency (Agencia Estatal de Investigaci\'on) through the Grant IFT Centro de Excelencia Severo Ochoa No CEX2020-001007-S, funded by MCIN/AEI/10.13039/501100011033.
S.K. is supported by the Spanish Atracci\'on de Talento contract no. 2019-T1/TIC-13177 granted by Comunidad de Madrid, the I+D grant PID2020-118159GA-C42 of the Spanish Ministry of Science and Innovation and the i-LINK 2021 grant LINKA20416 of CSIC.
T.Y. and S.K. are supported by Japan Society for the Promotion of Science (JSPS) KAKENHI Grant no. JP20H01899 and JP20H05853.
K.M. is supported by MEXT Quantum Leap Flagship Program (MEXT Q-LEAP) Grant no. JPMXS0120319794 and JSPS KAKENHI Grant no. JP22K11924.

\appendix
\renewcommand{\thesection}{Appendix \Alph{section}}

\section{Proof of Theorem \ref{th:QMCI} \label{sec:proofQMCI}}

The quantum algorithm for estimating the mean of variables with bounded variance is given as Algorithm 3 in \cite{montanaro2015}.
In order to use it as a subroutine in QAA, we now want to present it as a unitary transformation.
That is, we aim to remove measurements in the original algorithm in \cite{montanaro2015}.
Then, the modified algorithm outputs a quantum state in which the computational basis states corresponding to approximations of the mean have squared amplitudes summing up to almost 1.

We first present the following theorem on the mean estimation method for a bounded variable.
The method and the theorem are almost the same as Algorithm 1 and Theorem 2.3 in \cite{montanaro2015} in a specific case.
But, to be self-contained, we now present them with proof.
Note that, although Algorithm 1 in \cite{montanaro2015} is a procedure containing measurements, the following is a unitary transformation with no measurement to generate some quantum state.

\begin{theorem}
    Let $N\in\mathbb{N}$ and $\mathcal{X}$ be a set of $N$ real numbers $X_0,...,X_{N-1}\in[0,1]$.
    Suppose that we are given an oracle $O_\mathcal{X}$ that acts as Eq.~(\ref{eq:OX}). 
    Then, for any integer $t$ larger than 2, there is an oracle $O_{\mathcal{X},t}^{\rm mean}$ that acts as Eq.~(\ref{eq:ObarX}), where some ancillary qubits are undisplayed.
    Here, $\mathcal{Y}$ is a finite set of real numbers that includes a subset $\tilde{\mathcal{Y}}$ consisting of elements $\tilde{\mu}$ satisfying
    \begin{equation}
        |\tilde{\mu}-\mu|\le C\left(\frac{\sqrt{\mu}}{t}+\frac{1}{t^2}\right),
    \end{equation}
    with $\mu=\frac{1}{N}\sum_{i
    =0}^{N-1}X_i$ and a universal real constant $C$, and $\{\alpha_y\}_{y\in\mathcal{Y}}$ are complex numbers satisfying $\sum_{\tilde{y}\in\tilde{\mathcal{Y}}}|\alpha_{\tilde{y}}|^2\ge 8/\pi^2$.
    In $O_{\mathcal{X}}$, $O\left(t\right)$ queries to $O_X$ are made and $O\left(\log N + \log t\right)$ ancillary qubits are used.
\label{th:meanBd}
\end{theorem}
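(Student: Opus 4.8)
The plan is to reduce the statement to quantum amplitude estimation (Theorem~\ref{th:QAE}) via the standard amplitude-encoding construction, keeping every step unitary so that no measurement ever appears and the output is a genuine quantum state rather than a classical estimate.

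First I would build, out of $O_{\mathcal X}$, $O^{\rm EqPr}_N$, $O_{\rm CY}$ and the arithmetic oracles, a state-preparation unitary $A$ of the form required by Eq.~(\ref{eq:A}). Concretely: apply $O^{\rm EqPr}_N$ to obtain $\frac{1}{\sqrt N}\sum_{i=0}^{N-1}\ket{i}$; apply $O_{\mathcal X}$ to append $\ket{X_i}$; use arithmetic oracles for $\sqrt{\cdot}$ and $\arcsin(\cdot)$ (well defined because $X_i\in[0,1]$) to compute and store $\theta_i:=2\arcsin\sqrt{X_i}$ on an ancillary register; and finally apply $O_{\rm CY}$ with this controlled angle to a fresh qubit. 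Since $R_{\rm Y}(2\arcsin\sqrt{X_i})\ket{0}=\sqrt{1-X_i}\ket{0}+\sqrt{X_i}\ket{1}$, the result is
\begin{equation}
A\ket{0}\ket{0}=\frac{1}{\sqrt N}\sum_{i=0}^{N-1}\ket{i}\ket{X_i}\ket{\theta_i}\left(\sqrt{X_i}\ket{1}+\sqrt{1-X_i}\ket{0}\right)=\sqrt{\mu}\,\ket{\phi_1}\ket{1}+\sqrt{1-\mu}\,\ket{\phi_0}\ket{0},
\end{equation}
where the ``$n$-qubit register'' of Theorem~\ref{th:QAE} is taken to comprise the $i$-register together with the $X_i$- and $\theta_i$-registers (so $n=O(\log N)$), and $a=\mu$ exactly up to the finite-precision errors we agreed to neglect. (The degenerate case $\mu=1$, where all $X_i=1$, is handled trivially.)

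Next I would feed this $A$ and the given $t$ into the QAE oracle $\tilde O^{\rm QAE}_{A,t}$ of Theorem~\ref{th:QAE}, and simply set $O_{\mathcal X,t}^{\rm mean}:=\tilde O^{\rm QAE}_{A,t}$. This at once produces a state $\sum_{y\in\mathcal Y}\alpha_y\ket{y}$ whose ``good'' subset $\tilde{\mathcal Y}$ — elements $\tilde\mu$ with $|\tilde\mu-\mu|\le \frac{2\pi\sqrt{\mu(1-\mu)}}{t}+\frac{\pi^2}{t^2}$ — carries squared amplitude at least $8/\pi^2$. Using $\mu(1-\mu)\le\mu$ together with $2\pi<\pi^2$, this error bound relaxes to $|\tilde\mu-\mu|\le C\left(\frac{\sqrt\mu}{t}+\frac{1}{t^2}\right)$ with the universal constant $C=\pi^2$, which is precisely the claimed accuracy.

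Finally I would collect the resource counts: $A$ makes $O(1)$ calls to $O_{\mathcal X}$ and $O(1)$ calls to arithmetic oracles, while $\tilde O^{\rm QAE}_{A,t}$ makes $O(t)$ calls to $A$, so $O_{\mathcal X,t}^{\rm mean}$ makes $O(t)$ queries to $O_{\mathcal X}$; the qubit total is $O(\log N)$ for the index and value registers plus the $O(\log t)$ ancillas internal to QAE, i.e.\ $O(\log N+\log t)$. There is no deep obstacle here — the whole content is the amplitude-encoding trick — but the step needing the most care is the bookkeeping: checking that $A$ is genuinely unitary with no hidden measurement, that the $(i,X_i,\theta_i)$ registers may legitimately play the role of the ``$n$-qubit register'' of Theorem~\ref{th:QAE}, and that the QAE error estimate Eq.~(\ref{eq:QAEErr}) translates cleanly into the $\sqrt\mu$-form above.
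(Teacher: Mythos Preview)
Your proposal is correct and follows essentially the same route as the paper: build a state-preparation unitary that amplitude-encodes $\mu$ via $\frac{1}{\sqrt N}\sum_i\ket{i}\ket{X_i}(\sqrt{X_i}\ket{1}+\sqrt{1-X_i}\ket{0})$, feed it to the QAE oracle of Theorem~\ref{th:QAE}, and relax the error bound $\frac{2\pi\sqrt{\mu(1-\mu)}}{t}+\frac{\pi^2}{t^2}$ to $\pi^2\bigl(\frac{\sqrt\mu}{t}+\frac{1}{t^2}\bigr)$. You are a bit more explicit than the paper about how the rotation is implemented (via $\arcsin$ and $O_{\rm CY}$) and you flag the $\mu=1$ edge case, but the substance is identical.
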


\begin{proof}
    
    Combining $O^{\rm EqPr}_N$ and $O_\mathcal{X}$, we can construct an oracle $O^\prime_\mathcal{X}$ on a system with $O(\log N)$ qubits in total such that
    \begin{equation}
        O^\prime_\mathcal{X} \ket{0}\ket{0}=\frac{1}{\sqrt{N}}\sum_{i=0}^{N-1} \ket{i}\ket{X_i}, \label{eq:OprimeX}
    \end{equation}
    and, combining this with some arithmetic oracles, we can construct an oracle $O^{\prime\prime}_\mathcal{X}$ that acts as
    \begin{equation}
        O^{\prime\prime}_\mathcal{X} \ket{0}\ket{0}\ket{0}=\frac{1}{\sqrt{N}}\sum_{i=0}^{N-1} \ket{i}\ket{X_i}\left(\sqrt{X_i}\ket{1}+\sqrt{1-X_i}\ket{0}\right),
    \end{equation}
    where the last ket corresponds to an ancillary qubit.
    Note that we can write the RHS as
    \begin{equation}
        \sqrt{\mu}\ket{\Phi_1}\ket{1}+\sqrt{1-\mu}\ket{\Phi_0}\ket{0},
    \end{equation}
    where $\ket{\Phi_1}$ and $\ket{\Phi_0}$ are some quantum states on the first two registers.
    Then, as stated in Theorem \ref{th:QAE}, using $O^{\prime\prime}_\mathcal{X}$ $O(t)$ times, we can construct $O_{\mathcal{X},t}^{\rm mean}$ that acts as
    \begin{equation}
    O_{\mathcal{X},t}^{\rm mean}\ket{0}=\sum_{y\in \mathcal{Y}} \alpha_y\ket{y}, \label{eq:OXt}
    \end{equation}
    where a real number set $\mathcal{Y}$ includes a subset $\tilde{\mathcal{Y}}$ such that, for every $\tilde{\mu}\in\tilde{\mathcal{Y}}$,
    \begin{equation}
    |\tilde{\mu}-\mu|\le\frac{2\pi\sqrt{\mu(1-\mu)}}{t}+\frac{\pi^2}{t^2}\le \pi^2\left(\frac{\sqrt{\mu}}{t}+\frac{1}{t^2}\right)
    \end{equation}
    holds and that $\sum_{\tilde{\mu}\in\tilde{\mathcal{Y}}}|\alpha_{\tilde{\mu}}|^2\ge8/\pi^2$.
    Since $O^{\prime\prime}_\mathcal{X}$ contains one query to $O_\mathcal{X}$ and uses $O(\log N)$ qubits, the statements on the query complexity and the qubit number immediately follows from Theorem \ref{th:QAE}.
\end{proof}

We then prove Theorem \ref{th:QMCI}.

\begin{proof}[Proof of Theorem \ref{th:QMCI}]
    Our Theorem \ref{th:QMCI} is almost the same as Theorem 2.5 in \cite{montanaro2015} in a specific case and we now just aim to modify Algorithm 3 in \cite{montanaro2015} so that all intermediate measurements are removed and that the output is a quantum state, see Eq.~(\ref{eq:ObarX}).
    Let us start by presenting the method with measurements as Algorithm \ref{alg:QMCIOrg}, which outputs an $\epsilon$-approximation of $\mu$, according to Theorem 2.5 in \cite{montanaro2015}.
    Note that, although the procedure looks different from that in Algorithm 3 in \cite{montanaro2015}, it is actually almost the same and the difference arises just because we explicitly write the steps that are originally shown separately as Algorithm 2 in \cite{montanaro2015}, in lines 7 and 8.
    There are only two differences, which enhance the lower bound of the success probability from $\frac{2}{3}$, the original value in \cite{montanaro2015}, to $1-\delta$.
    First, $K$, the number of repeated state generations and measurements in line 8, is different from that in Algorithm 2 in \cite{montanaro2015}.
    The current setting of $K$ makes the probability that, for each $(j,l)\in[J]\times[L+1]_0$, $\tilde{\tilde{\mu}}_{j,l}^+$ (resp. $\tilde{\tilde{\mu}}_{j,l}^-$) becomes an estimate of $\mu_{l,j}^+$ (resp. $\mu_{l,j}^-$) with desired accuracy larger than $1-\frac{5}{64(L+1)}$ (see Theorem \ref{th:median}).
    Thus, the probability that $2(L+1)$ estimations of $\mu_{j,0}^+,...,\mu_{j,L}^+,\mu_{j,0}^-,...,\mu_{j,L}^-$ simultaneously succeed is larger than $\left(1-\frac{5}{64(L+1)}\right)^{2(L+1)}\ge \frac{27}{32}$, and therefore, the probability that $\tilde{\mu}_j$ is $\epsilon$-close to $\mu$ is larger than $\frac{8}{9}\times \frac{27}{32}=\frac{3}{4}$ ($\frac{8}{9}$ is a lower bound on the probability for $\left|\sigma\tilde{m}_j-\mu\right|\le3\sigma$; see \cite{montanaro2015}).
    Second, Algorithm \ref{alg:QMCIOrg} in this paper has the loop on $j=1,...,J$, which means that Algorithm 3 in \cite{montanaro2015} itself is repeated $J$ times and that the median of the outputs is taken.
    This makes the lower bound of the success probability of the $\mu$ estimation from $\frac{3}{4}$ to $1-\delta$.
    
    \begin{algorithm}[htp]
	\caption{QMCI algorithm with measurements}
	\label{alg:QMCIOrg}
	\begin{algorithmic}[1]
		\REQUIRE{oracle $O_\mathcal{X}$ in Eq.~(\ref{eq:OX}), an upper bound $\sigma$ of ${\rm Var}(\mathcal{X})$, accuracy $\epsilon\in(0,4\sigma)$, and $\delta\in(0,1)$
		}
	
	    \STATE Set  $L:=\left\lceil\log_2\left(\frac{32\sigma}{\epsilon}\right)\right\rceil$, $J:=12\left\lceil\log\delta^{-1}\right\rceil+1$,
	    $K:=12\left\lceil\log\left(\frac{64(L+1)}{5}\right)\right\rceil+1$, $t_0:=\left\lceil\frac{32\sigma D\sqrt{\log_2\left(\frac{32\sigma}{\epsilon}\right)}}{\epsilon}\right\rceil$, where $D$ is a universal constant given in \cite{montanaro2015}.
	    
	    \FOR {$j=1,...,J$}
    		\STATE Randomly choose an integer $i_j$ from $[N]_0$ and generate the state $O_\mathcal{X}\ket{i_j}\ket{0}=\ket{i_j}\ket{X_{i_j}}$. Measure the second register in the computational basis and let the outcome divided by $\sigma$ be $\tilde{m}_j$.
		
	        \FOR {$l=0,1,...,L$}
	            \STATE Let $\tilde{\mathcal{X}}^{+}_{j,l}:=\{\tilde{X}^{+}_{l,1}(\tilde{m}_j),...,\tilde{X}^{+}_{l,N}(\tilde{m}_j)\}$ and $\tilde{\mathcal{X}}^{-}_{j,l}:=\{\tilde{X}^{-}_{l,1}(\tilde{m}_j),...,\tilde{X}^{-}_{l,N}(\tilde{m}_j)\}$, where, for $i\in[N]_0$ and $m\in\mathbb{R}$, $\tilde{X}^{\pm}_{l,i}(m)$ is given as
	            \begin{eqnarray}
	            \tilde{X}^{+}_{0,i}(m)&:=&
                \begin{cases}
                \frac{1}{4}\left(\frac{X_i}{\sigma}-\tilde{m}\right) & ; \ {\rm if} \ 0\le\frac{1}{4}\left(\frac{X_i}{\sigma}-\tilde{m}\right)<1 \\
                0 & ; \ {\rm otherwise}
                \end{cases}, \nonumber \\
		        \tilde{X}^{-}_{0,i}(m)&:=&
                \begin{cases}
                -\frac{1}{4}\left(\frac{X_i}{\sigma}-\tilde{m}\right) & ; \ {\rm if} \ -1<\frac{1}{4}\left(\frac{X_i}{\sigma}-\tilde{m}\right)<0 \\
                0 & ; \ {\rm otherwise}
                \end{cases}
		        \end{eqnarray}
		        when $l=0$, and
		        \begin{eqnarray}
		        \tilde{X}^{+}_{l,i}(m)&:=&
                \begin{cases}
                \frac{1}{4\cdot2^l}\left(\frac{X_i}{\sigma}-\tilde{m}\right) & ; \ {\rm if} \ 2^{l-1}\le\frac{1}{4}\left(\frac{X_i}{\sigma}-\tilde{m}\right)<2^l \\
                0 & ; \ {\rm otherwise}
                \end{cases}, \nonumber \\
		        \tilde{X}^{-}_{j,l,i}(m)&:=&
                \begin{cases}
                -\frac{1}{4\cdot2^l}\left(\frac{X_i}{\sigma}-\tilde{m}\right) & ; \ {\rm if} \ -2^{l}<\frac{1}{4}\left(\frac{X_i}{\sigma}-\tilde{m}\right)\le-2^{l-1} \\
                0 & ; \ {\rm otherwise}
                \end{cases}
		        \end{eqnarray}
		        when $l\ge1$.
		       
	            \STATE Construct an oracle $O_{\tilde{\mathcal{X}}^{+}_{j,l}}$ (resp. $O_{\tilde{\mathcal{X}}^{-}_{j,l}}$) such that $O_{\tilde{\mathcal{X}}^{+}_{j,l}}\ket{0}\ket{0}=\frac{1}{\sqrt{N}}\sum_{i=0}^{N-1}\ket{i}\ket{\tilde{X}^{+}_{l,i}(m_j)}$ (resp. $O_{\tilde{\mathcal{X}}^{-}_{j,l}}\ket{0}\ket{0}=\frac{1}{\sqrt{N}}\sum_{i=0}^{N-1}\ket{i}\ket{\tilde{X}^{-}_{l,i}(m_j)}$) by combining $O^{\rm EqPr}_N$, $O_\mathcal{X}$ and some arithmetic oracles.
	       
	            \STATE Using $O_{\tilde{\mathcal{X}}^{+}_{j,l}}$, construct an oracle $O^{\rm mean}_{\tilde{\mathcal{X}}^{+}_{j,l},t_0}$ that acts like Eq.~(\ref{eq:ObarX}), that is, $O^{\rm mean}_{\tilde{\mathcal{X}}^{+}_{j,l},t_0}\ket{0}=\sum_{y\in\mathcal{Y}^{+}_{j,l}}\alpha_y\ket{y}$, where a real number set $\mathcal{Y}^{+}_{j,l}$ includes a subset $\tilde{\mathcal{Y}}^{+}_{j,l}$ such that, for every $\tilde{\mu}\in\tilde{\mathcal{Y}}^{+}_{j,l}$, $|\tilde{\mu}-\mu_{j,l}^+|\le C\left(\frac{\sqrt{\mu_{j,l}^+}}{t_0}+\frac{1}{t_0^2}\right)$ holds with $\mu_{j,l}^+:=\frac{1}{N}\sum_{i=0}^{N-1}\tilde{X}^{+}_{j,l,i}$ and that $\sum_{\tilde{\mu}\in\tilde{\mathcal{Y}}^{+}_{j,l}}|\alpha_{\tilde{\mu}}|^2\ge8/\pi^2$.
	            
	            Similarly, using $O_{\tilde{\mathcal{X}}^{-}_{j,l}}$, construct an oracle $O^{\rm mean}_{\tilde{\mathcal{X}}^{-}_{j,l},t_0}$ that acts as $O^{\rm mean}_{\tilde{\mathcal{X}}^{-}_{j,l},t_0}\ket{0}=\sum_{y\in\mathcal{Y}^{-}_{j,l}}\alpha_y\ket{y}$, where a real number set $\mathcal{Y}^{-}_{j,l}$ includes a subset $\tilde{\mathcal{Y}}^{-}_{j,l}$ such that, for every $\tilde{\mu}\in\tilde{\mathcal{Y}}^{-}_{j,l}$, $|\tilde{\mu}-\mu_{j,l}^-|\le C\left(\frac{\sqrt{\mu_{j,l}^-}}{t_0}+\frac{1}{t_0^2}\right)$ holds with $\mu_{j,l}^-:=\frac{1}{N}\sum_{i=0}^{N-1}\tilde{X}^{-}_{j,l,i}$ and that $\sum_{\tilde{\mu}\in\tilde{\mathcal{Y}}^{-}_{j,l}}|\alpha_{\tilde{\mu}}|^2\ge8/\pi^2$. 
	       
	            \STATE Generate $K$ copies of the quantum state $O^{\rm mean}_{\tilde{\mathcal{X}}^{+}_{j,l},t_0}\ket{0}$ (resp. $O^{\rm mean}_{\tilde{\mathcal{X}}^{-}_{j,l},t_0}\ket{0}$) and measure them in the computational basis.
	            Let the median of the measurement outcomes be $\tilde{\tilde{\mu}}^+_{j,l}$ (resp. $\tilde{\tilde{\mu}}^-_{j,l}$).

	        \ENDFOR
		    
	        \STATE Set $\tilde{\mu}_j:=\sigma\left(\tilde{m}_j+4\sum_{l=0}^L2^l(\tilde{\tilde{\mu}}^+_{j,l}-\tilde{\tilde{\mu}}^-_{j,l})\right)$.
		    
		\ENDFOR
		
		\STATE Output the median of $\tilde{\mu}_1,...,\tilde{\mu}_J$.
		
	\end{algorithmic}
\end{algorithm}

Now, let us present the implementation of this algorithm without intermediate measurements.
For a preparation, we randomly choose $J$ integers from $[N]_0$ and let them $i_1,...,i_J$.
Then, on a four-register system initialized as $\ket{i_j}\ket{0}\ket{0}\ket{0}$, we perform $O_{\mathcal{X}}$ and a division to yield $\ket{i_j}\ket{\tilde{m}_j}\ket{0}\ket{0}$, where $\tilde{m}_j:=X_{i_j}/\sigma$.
Furthermore, combining $O^{\rm EqPr}_N$, $O_{\mathcal{X}}$ and some arithmetic oracles, we construct an oracle $O_{\tilde{\mathcal{X}}^{+}_{j,l}}$ that acts on $\ket{i_j}\ket{\tilde{m}_j}\ket{0}\ket{0}$ as
\begin{equation}
O_{\tilde{\mathcal{X}}^{+}_{j,l}}\ket{i_j}\ket{\tilde{m}_j}\ket{0}\ket{0}=\ket{i_j}\ket{\tilde{m}_j}\left(\frac{1}{\sqrt{N}}\sum_{i=0}^{N-1}\ket{i}\ket{\tilde{X}^{+}_{l,i}(\tilde{m}_j)}\right).
\end{equation}
Similarly, we obtain an oracle $O_{\tilde{\mathcal{X}}^{-}_{j,l}}$ that acts as
\begin{equation}
    O_{\tilde{\mathcal{X}}^{-}_{j,l}}\ket{i_j}\ket{\tilde{m}_j}\ket{0}\ket{0}=\ket{i_j}\ket{\tilde{m}_j}\left(\frac{1}{\sqrt{N}}\sum_{i=0}^{N-1}\ket{i}\ket{\tilde{X}^{-}_{l,i}(\tilde{m}_j)}\right).
\end{equation}
Using $O_{\tilde{\mathcal{X}}^{+}_{j,l}}$ and $O_{\tilde{\mathcal{X}}^{-}_{j,l}}$ $O(t_0)$ times, we can construct oracles $\tilde{O}^{\rm mean}_{\tilde{\mathcal{X}}^{+}_{j,l},t_0}$ and $\tilde{O}^{\rm mean}_{\tilde{\mathcal{X}}^{-}_{j,l},t_0}$, which resemble $O^{\rm mean}_{\tilde{\mathcal{X}}^{+}_{j,l},t_0}$ and $O^{\rm mean}_{\tilde{\mathcal{X}}^{-}_{j,l},t_0}$ in Algorithm \ref{alg:QMCIOrg}, respectively, but act as
\begin{equation}
\tilde{O}^{\rm mean}_{\tilde{\mathcal{X}}^{+}_{j,l},t_0}\ket{i_j}\ket{\tilde{m}_j}\ket{0}=\ket{i_j}\ket{\tilde{m}_j}\left(\sum_{y\in\mathcal{Y}^+_{j,l}}\alpha_y\ket{y}\right), \tilde{O}^{\rm mean}_{\tilde{\mathcal{X}}^{-}_{j,l},t_0}\ket{i_j}\ket{\tilde{m}_j}\ket{0}=\ket{i_j}\ket{\tilde{m}_j}\left(\sum_{y\in\mathcal{Y}^-_{j,l}}\alpha_y\ket{y}\right),
\end{equation}
where $\mathcal{Y}^{\pm}_{j,l}$ and $\alpha_y$ are described in Algorithm \ref{alg:QMCIOrg} and some registers are not displayed.
Then, on an appropriate number of registers, some of which are initialized to $\ket{i_1},...,\ket{i_J}$, we use $O_{\mathcal{X}}$, $\tilde{O}^{\rm mean}_{\tilde{\mathcal{X}}^{+}_{j,l},t_0}$ and $\tilde{O}^{\rm mean}_{\tilde{\mathcal{X}}^{-}_{j,l},t_0}$ to generate the following quantum state:
\begin{eqnarray}
    &&\bigotimes_{j=1}^J\ket{i_j}\ket{\tilde{m}_j}\otimes\left(\bigotimes_{l=0}^L \left(\sum_{y^+_{j,l,1}\in\mathcal{Y}^+_{j,l}}\alpha_{y^+_{j,l,1}}\ket{y^+_{j,l,1}}\right)\otimes\cdots\otimes\left(\sum_{y^+_{j,l,K}\in\mathcal{Y}^+_{j,l}}\alpha_{y^+_{j,l,K}}\ket{y^+_{j,l,K}}\right)\right. \nonumber \\
    &&\qquad\qquad\qquad\qquad\qquad\left.\otimes\left(\sum_{y^-_{j,l.1}\in\mathcal{Y}^-_{j,l}}\alpha_{y^-_{j,l,1}}\ket{y^-_{j,l,1}}\right)\otimes\cdots\otimes\left(\sum_{y^-_{j,l,K}\in\mathcal{Y}^-_{j,l}}\alpha_{y^-_{j,l,K}}\ket{y^-_{j,l,K}}\right)\right) \nonumber\\
    &=&\bigotimes_{j=1}^J\ket{i_j}\ket{\tilde{m}_j}\otimes\left(\bigotimes_{l=0}^L \sum_{\substack{y^+_{j,l,1}\in\mathcal{Y}^+_{j,l},\cdots,y^+_{j,l,K}\in\mathcal{Y}^+_{j,l} \\ y^-_{j,l,1}\in\mathcal{Y}^-_{j,l},\cdots,y^-_{j,l,K}\in\mathcal{Y}^-_{j,l} }}\left(\prod_{k=1}^K\alpha_{y^+_{j,l,k}}\alpha_{y^-_{j,l,k}}\right)\ket{y^+_{j,l,1}}\cdots\ket{y^+_{j,l,K}}\ket{y^-_{j,l,1}}\cdots\ket{y^-_{j,l,K}}\right). \label{eq:QMCIStateInt}
\end{eqnarray}
Further, adding some registers and performing $O^{\rm med}_K$, we obtain
\begin{equation}
    \bigotimes_{j=1}^J\ket{i_j}\ket{\tilde{m}_j}\otimes\left(\bigotimes_{l=0}^L \sum_{\substack{y^+_{j,l,1}\in\mathcal{Y}^+_{j,l},\cdots,y^+_{j,l,K}\in\mathcal{Y}^+_{j,l} \\ y^-_{j,l,1}\in\mathcal{Y}^-_{j,l},\cdots,y^-_{j,l,K}\in\mathcal{Y}^-_{j,l} }}\left(\prod_{k=1}^K\alpha_{y^+_{j,l,k}}\alpha_{y^-_{j,l,k}}\right)\ket{y^+_{j,l,1}}\cdots\ket{y^+_{j,l,K}}\ket{y^-_{j,l,1}}\cdots\ket{y^-_{j,l,K}}\ket{\tilde{\tilde{\mu}}^+_{j,l}}\ket{\tilde{\tilde{\mu}}^-_{j,l}}\right),
\end{equation}
where $\tilde{\tilde{\mu}}^+_{j,l}={\rm med}(y^+_{j,l,1},...,y^+_{j,l,K})$ and $\tilde{\tilde{\mu}}^+_{j,l}={\rm med}(y^-_{j,l,1},...,y^-_{j,l,K})$.
Moreover, adding further registers and using arithmetic oracles, we obtain
\begin{equation}
    \bigotimes_{j=1}^J \ket{i_j}\ket{\tilde{m}_j}\otimes\left(\left(\bigotimes_{l=0}^L\sum_{\substack{y^+_{j,l,1}\in\mathcal{Y}^+_{j,l},\cdots,y^+_{j,l,K}\in\mathcal{Y}^+_{j,l} \\ y^-_{j,l,1}\in\mathcal{Y}^-_{j,l},\cdots,y^-_{j,l,K}\in\mathcal{Y}^-_{j,l} }}\left(\prod_{k=1}^K\alpha_{y^+_{j,l,k}}\alpha_{y^-_{j,l,k}}\right)\ket{y^+_{j,l,1}}\cdots\ket{y^+_{j,l,K}}\ket{y^-_{j,l,1}}\cdots\ket{y^-_{j,l,K}}\ket{\tilde{\tilde{\mu}}^+_{j,l}}\ket{\tilde{\tilde{\mu}}^-_{j,l}}\right)\otimes\ket{\tilde{\mu}_j}\right),
\end{equation}
where $\tilde{\mu}_j:=\sigma\left(\tilde{m}_j+4\sum_{l=0}^L2^l(\tilde{\tilde{\mu}}^+_{j,l}-\tilde{\tilde{\mu}}^-_{j,l})\right)$.
Finally, performing $O^{\rm med}_J$ yields the state
\begin{equation}
    \left[\bigotimes_{j=1}^J \ket{i_j}\ket{\tilde{m}_j}\otimes\left(\left(\bigotimes_{l=0}^L\sum_{\substack{y^+_{j,l,1}\in\mathcal{Y}^+_{j,l},\cdots,y^+_{j,l,K}\in\mathcal{Y}^+_{j,l} \\ y^-_{j,l,1}\in\mathcal{Y}^-_{j,l},\cdots,y^-_{j,l,K}\in\mathcal{Y}^-_{j,l} }}\left(\prod_{k=1}^K\alpha_{y^+_{j,l,k}}\alpha_{y^-_{j,l,k}}\right)\ket{y^+_{j,l,1}}\cdots\ket{y^+_{j,l,K}}\ket{y^-_{j,l,1}}\cdots\ket{y^-_{j,l,K}}\ket{\tilde{\tilde{\mu}}^+_{j,l}}\ket{\tilde{\tilde{\mu}}^-_{j,l}}\right)\otimes\ket{\tilde{\mu}_j}\right)\right]\otimes\ket{\tilde{\mu}}, \label{eq:QMCIFinState}
\end{equation}
where $\tilde{\mu}={\rm med}(\tilde{\mu}_1,...,\tilde{\mu}_J)$.
As Algorithm \ref{alg:QMCIOrg}, if we measure the last register in this final state, we obtain an $\epsilon$-approximation of $\mu$ with probability at least $1-\delta$.
This means that the final state, Eq.~(\ref{eq:QMCIFinState}), can be written as Eq.~(\ref{eq:ObarX}), with some registers undisplayed.
We therefore regard the above unitary transformation to generate Eq.~(\ref{eq:QMCIFinState}) as $O_{\mathcal{X},\epsilon,\delta,\sigma}^{\rm mean}$.

Lastly, let us consider the statements on the query complexity and the qubit number.
Note that we generate the state of Eq.~(\ref{eq:QMCIStateInt}) by $O(J)$ uses of $O_{\mathcal{X}}$ and
\begin{equation}
O(JKL)=O\left(\log\left(\frac{\sigma}{\epsilon}\right)\log\log\left(\frac{\sigma}{\epsilon}\right)\log\delta^{-1}\right) \label{eq:JKL}
\end{equation}
uses of $\tilde{O}^{\rm mean}_{\tilde{\mathcal{X}}^{+}_{j,l},t_0}$ and $\tilde{O}^{\rm mean}_{\tilde{\mathcal{X}}^{-}_{j,l},t_0}$, along with initialization of some registers to $\ket{i_1},...,\ket{i_J}$, and 
the transformation from Eq.~(\ref{eq:QMCIStateInt}) to Eq.~(\ref{eq:QMCIFinState}) is done by only arithmetic oracles.
Also note that, in $\tilde{O}^{\rm mean}_{\tilde{\mathcal{X}}^{+}_{j,l},t_0}$ and $\tilde{O}^{\rm mean}_{\tilde{\mathcal{X}}^{-}_{j,l},t_0}$, $O_{\tilde{\mathcal{X}}^{+}_{j,l}}$ and $O_{\tilde{\mathcal{X}}^{+}_{j,l}}$ are called $O(t_0)=O\left(\frac{\sigma}{\epsilon}\log^{1/2}\left(\frac{\sigma}{\epsilon}\right)\right)$ times.
The number of calls to $O_{\mathcal{X}}$ in $\tilde{O}^{\rm mean}_{\tilde{\mathcal{X}}^{+}_{j,l},t_0}$ and $\tilde{O}^{\rm mean}_{\tilde{\mathcal{X}}^{-}_{j,l},t_0}$ is also of the same order, since each of $O_{\tilde{\mathcal{X}}^{+}_{j,l}}$ and $O_{\tilde{\mathcal{X}}^{+}_{j,l}}$ contains one call to $O_{\mathcal{X}}$.
Combining these observations, we see that the number of uses of $O_{\mathcal{X}}$ in generating Eq.~(\ref{eq:QMCIFinState}) is given by  Eq.~(\ref{eq:compQMCI}).
When it comes to qubit number, we note that, as stated in Theorem \ref{th:meanBd}, $\tilde{O}^{\rm mean}_{\tilde{\mathcal{X}}^{+}_{j,l},t_0}$ and $\tilde{O}^{\rm mean}_{\tilde{\mathcal{X}}^{-}_{j,l},t_0}$ use $O(\log N+\log t_0)=O\left(\log N+\log \left(\frac{\sigma}{\epsilon}\right)\right)$ qubits, and thus we use $O\left(\left(\log N+\log \left(\frac{\sigma}{\epsilon}\right)\right)\times JKL\right)$ qubits, which is of order as in Eq.~(\ref{eq:qubitQMCI}), in preparing Eq.~(\ref{eq:QMCIStateInt}).
Also note that added registers in transformation from Eq.~(\ref{eq:QMCIStateInt}) to Eq.~(\ref{eq:QMCIFinState}) is $O(JL)$.
From these observations, the total number of qubits used in generating Eq.~(\ref{eq:QMCIFinState}) is given by Eq.~(\ref{eq:qubitQMCI}).

\end{proof}

\section{Remaining part of the proof of Theorem \ref{th:main} \label{sec:remMainProof}}

\begin{proof}[The remaining part of the proof]
 \ \\
  \ \\
    \noindent \textbf{Query complexity and qubit number}\\
    
    We consider the following cases separately.\\
    
    \noindent (i) $r(\rho_{\rm hard})>0$\\
    
    In this case, there exists $(m,j)\in[N_{\rm temp}]_0\times\left[M\right]_0$ such that $\rho_{m,j}\ge\rho_{\rm hard}$.
    For such $(m,j)$,
    \begin{equation}
        |y-\rho_{m,j}|\le\epsilon\Rightarrow y\ge\rho_{\rm mid}
    \end{equation}
    holds for any $y\in\mathbb{R}$ (recall that we are now setting $\epsilon=\frac{\rho_{\rm hard}-\rho_{\rm soft}}{2}$), and thus
    \begin{equation}
        \sum_{\substack{y\in \mathcal{Y}_{m,j} \\ y\ge\rho_{\rm mid}}}|\alpha_{\rho,y}|^2\ge \sum_{\substack{y\in \mathcal{Y}_{m,j} \\ |y-\rho_{m,j}|\le\epsilon}}|\alpha_{\rho,y}|^2\ge 1-\delta^\prime\ge \frac{1}{2}
    \end{equation}
    holds.
    Using this, $p_1$, the probability that we obtain 1 when we measure the last qubit in the final state in Eq.~(\ref{eq:OAE}), is evaluated as
    \begin{equation}
        p_1=\frac{1}{N_{\rm temp}M}\sum_{m=0}^{N_{\rm temp}-1}\sum_{j=0}^{M-1}\sum_{\substack{y\in \mathcal{Y}_{m,j} \\ y\ge\rho_{\rm mid}}}|\alpha_{\rho,y}|^2\ge\frac{1}{N_{\rm temp}M}\sum_{\substack{(m,j)\in[N_{\rm temp}]_0\times\left[M\right]_0\\ \rho_{m,j}\ge\rho_{\rm hard}}}\frac{1}{2}=\frac{\tilde{r}(\rho_{\rm hard})}{2}\ge\frac{1}{2N_{\rm temp}M}.
    \end{equation}
    Therefore, by $\proc{QAA}\left(O_{\rm AE},\frac{1}{2N_{\rm temp}M},\frac{\delta}{2}\right)$, we get the state
    \begin{equation}
        \ket{\psi}:=\frac{1}{\sqrt{p_1}}\sum_{m=0}^{N_{\rm temp}-1}\sum_{j=0}^{M-1}\ket{m}\ket{j}\sum_{\substack{y\in \mathcal{Y}_{m,j} \\ y\ge\rho_{\rm mid}}}\alpha_{\rho,y}\ket{y},
    \end{equation}
    with probability at least $1-\frac{\delta}{2}$.
    
    On the other hand, note that, for $(m,j)\in[N_{\rm temp}]_0\times\left[M\right]_0$ such that $\rho_{m,j}<\rho_{\rm soft}$, 
    \begin{equation}
        y\ge\rho_{\rm mid}\Rightarrow|y-\rho_{m,j}|>\epsilon\Rightarrow y\notin\tilde{\mathcal{Y}}_{m,j}
    \end{equation}
    holds for any $y\in\mathbb{R}$, and thus we have
    \begin{equation}
        \sum_{\substack{y\in \mathcal{Y}_{m,j} \\ y\ge\rho_{\rm mid}}}|\alpha_{\rho,y}|^2 = \sum_{\substack{y\in \mathcal{Y}_{m,j}\setminus\tilde{\mathcal{Y}}_{m,j} \\ y\ge\rho_{\rm mid}}}|\alpha_{\rho,y}|^2\le \sum_{y\in \mathcal{Y}_{m,j}\setminus\tilde{\mathcal{Y}}_{m,j}}|\alpha_{\rho,y}|^2 < \delta^\prime \,.
    \end{equation}
    This means that $p_{\ge \rho_{\rm soft}}$ the probability that we obtain $(m,j)$ such that $\rho_{m,j}\ge\rho_{\rm soft}$ when we measure the first two registers in $\ket{\psi}$ is evaluated as
    \begin{eqnarray}
        p_{\ge \rho_{\rm soft}} &=& \frac{1}{p_1}\sum_{\substack{(m,j)\in[N_{\rm temp}]_0\times\left[M\right]_0 \\ \rho_{m,j}\ge\rho_{\rm soft}}}\sum_{\substack{y\in \mathcal{Y}_{m,j} \\ y\ge\rho_{\rm mid}}}\frac{1}{N_{\rm temp}M}|\alpha_{\rho,y}|^2 \nonumber \\
        &=&\frac{\sum_{\substack{(m,j)\in[N_{\rm temp}]_0\times\left[M\right]_0 \\ \rho_{m,j}\ge\rho_{\rm soft}}}\sum_{\substack{y\in \mathcal{Y}_{m,j} \\ y\ge\rho_{\rm mid}}}|\alpha_{\rho,y}|^2}{\sum_{(m,j)\in[N_{\rm temp}]_0\times\left[M\right]_0}\sum_{\substack{y\in \mathcal{Y}_{m,j} \\ y\ge\rho_{\rm mid}}}|\alpha_{\rho,y}|^2} \nonumber \\
        &=& 1- \frac{\sum_{\substack{(m,j)\in[N_{\rm temp}]_0\times\left[M\right]_0 \\ \rho_{m,j}<\rho_{\rm soft}}}\sum_{\substack{y\in \mathcal{Y}_{m,j} \\ y\ge\rho_{\rm mid}}}|\alpha_{\rho,y}|^2}{\sum_{(m,j)\in[N_{\rm temp}]_0\times\left[M\right]_0}\sum_{\substack{y\in \mathcal{Y}_{m,j} \\ y\ge\rho_{\rm mid}}}|\alpha_{\rho,y}|^2} \nonumber \\
        &\ge& 1-2N_{\rm temp}M\delta^\prime \nonumber \\
        &=& 1-\frac{\delta}{2},
    \end{eqnarray}
    where, at the inequality, we use
    \begin{equation}
        \sum_{\substack{(m,j)\in[N_{\rm temp}]_0\times\left[M\right]_0 \\ \rho_{m,j}<\rho_{\rm soft}}}\sum_{\substack{y\in \mathcal{Y}_{m,j} \\ y\ge\rho_{\rm mid}}}|\alpha_{\rho,y}|^2\le \sum_{\substack{(m,j)\in[N_{\rm temp}]_0\times\left[M\right]_0 \\ \rho_{m,j}<\rho_{\rm soft}}}\delta^\prime\le\sum_{(m,j)\in[N_{\rm temp}]_0\times\left[M\right]_0}\delta^\prime=N_{\rm temp}M\delta^\prime
    \end{equation}
    and
    \begin{equation}
        \sum_{(m,j)\in[N_{\rm temp}]_0\times\left[M\right]_0}\sum_{\substack{y\in \mathcal{Y}_{m,j} \\ y\ge\rho_{\rm mid}}}|\alpha_{\rho,y}|^2 =N_{\rm temp}Mp_1\ge\frac{1}{2}.
    \end{equation}
    In summary, by the algorithm, we get $(m,j)$ such that $\rho_{m,j}\ge\rho_{\rm soft}$ with probability at least $\left(1-\frac{\delta}{2}\right)^2\ge1-\delta$.
    
    The query complexity is evaluated as follows.
    Until we get an output, $\proc{QAA}\left(O_{\rm AE},\frac{1}{2N_{\rm temp}M},\frac{\delta}{2}\right)$ calls $O_{\rm AE}$
    \begin{equation}
    O\left(\frac{\log \delta^{-1}}{\sqrt{p_1}}\right)=O\left(\frac{\log \delta^{-1}}{\sqrt{\tilde{r}(\rho_{\rm hard})}}\right)
    \end{equation}
    times.
    As stated in Theorem \ref{th:QMCI}, $O_{\rm AE}$ makes
    \begin{equation}
        O\left(\frac{\sigma}{\epsilon}\log^{3/2}\left(\frac{\sigma}{\epsilon}\right)\log\log\left(\frac{\sigma}{\epsilon}\right)\log\left(\frac{1}{\delta^\prime}\right)\right)=O\left(\frac{\sqrt{M}\gamma}{\rho_{\rm hard}-\rho_{\rm soft}}\log^{3/2}\left(\frac{\sqrt{M}\gamma}{\rho_{\rm hard}-\rho_{\rm soft}}\right)\log\log\left(\frac{\sqrt{M}\gamma}{\rho_{\rm hard}-\rho_{\rm soft}}\right)\log\left(\frac{N_{\rm temp}M}{\delta}\right)\right) \label{eq:numOrho}
    \end{equation}
    calls to $O_\rho$.
    The number of calls to $O_{\rm Re}$ and $O_{\rm Im}$ is of the same order, since $O_\rho$ contains $O(1)$ calls to them. 
    Combining these observations, we obtain the estimations given by Eqs.~(\ref{eq:compSigCase1}) and (\ref{eq:compSigCase2}) for the number of calls to $O_{\rm Re}$ and $O_{\rm Im}$ in the algorithm.
    
    The number of qubits used in this algorithm is dominated by that required to perform $O_{\rm AE}$ in Eq.~(\ref{eq:OAE}), since QAA does not require additional qubits.
    The number of qubits used for the operation in Eq.~(\ref{eq:OAE}) is estimated as follows.
    The first two registers have
    \begin{equation}
    O(\log N_{\rm temp}+\log M) \label{eq:qbitnum1st2nd}
    \end{equation}
    qubits in total.
    Besides, according to Theorem \ref{th:QMCI}, the third register and ancillary registers used for $O_{\rho,\epsilon,\delta^\prime,\sigma}^{\rm mean}$ at the second arrow have
    \begin{equation}
    O\left(\left(\log M+\log\left(\frac{\sigma}{\epsilon}\right)\right)\log\left(\frac{\sigma}{\epsilon}\right)\log\log\left(\frac{\sigma}{\epsilon}\right)\log\left(\frac{1}{\delta^\prime}\right)\right) \label{eq:qbitnum3rd}
    \end{equation}
    qubits in total.
    Eq.~(\ref{eq:qbitnum3rd}) becomes Eq.~(\ref{eq:qnum}) under the setting on $\epsilon$, $\sigma$ and $\delta^\prime$ in Algorithm \ref{alg:main}.
    Since Eq.~(\ref{eq:qbitnum1st2nd}) is subdominant to Eq.~(\ref{eq:qnum}), we have an upper bound on the total qubit number as Eq.~(\ref{eq:qnum}).\\
    
    \noindent (ii) $r(\rho_{\rm soft})=0$\\
    
    If $\proc{QAA}\left(O_{\rm AE},\frac{1}{2N_{\rm temp}M},\frac{\delta}{2}\right)$ outputs ``failure", the algorithm outputs ``there is no signal".
    In case QAA outputs some quantum state $\ket{\psi}$ by error, the algorithm goes to the second step where $\rho_{m,j}$ is classically calculated for $(m,j)\in[N_{\rm temp}]_0\times\left[M\right]_0$ given by the first two registers in $\ket{\psi}$.
    The output of the classical computation should be smaller than $\rho_{\rm soft}$, since we are now considering the case of  $r(\rho_{\rm soft})=0$, which means that $\rho_{m,j}< \rho_{\rm soft}$ holds for any $(m,j)\in[N_{\rm temp}]_0\times\left[M\right]_0$.
    Accordingly, the algorithm outputs ``there is no signal" at the second step. 
    In summary, at any rate, the algorithm outputs this message, if $r(\rho_{\rm soft})=0$.
    
    According to Theorem \ref{th:QAA}, in any cases, the number of calls to $O_{\rm AE}$ in $\proc{QAA}\left(O_{\rm AE},\frac{1}{2N_{\rm temp}M},\frac{\delta}{2}\right)$ is at most $O\left(\sqrt{N_{\rm temp}M}\log \delta^{-1}\right)$.
    As stated above, the number of calls to $O_{\rm Re}$ and $O_{\rm Im}$ in $O_{\rm AE}$ is given by Eq.~(\ref{eq:numOrho}).
    Combining these, we obtain the bound Eq.~(\ref{eq:compNoSigCase}) for the number of queries to $O_{\rm Re}$ and $O_{\rm Im}$ in the algorithm.
    
    Since $\proc{QAA}\left(O_{\rm AE},\frac{1}{2N_{\rm temp}M},\frac{\delta}{2}\right)$ runs on the same system in any cases, the discussion on qubit number is the same as Case (i).\\
    
    \noindent (iii) $r(\rho_{\rm hard})=0$ and $r(\rho_{\rm soft})>0$\\
    
    Since the evaluation on the maximum number of calls to $O_{\rm Re}$ and $O_{\rm Im}$ obtained in Case (ii) also applies to this case, we have the same query complexity bound of Eq.~(\ref{eq:compNoSigCase}).
    The discussion on qubit number is also the same as Case (ii).
    
\end{proof}

\bibliography{reference}
\bibliographystyle{unsrt}

\end{document}